\theoremstyle{plain}
\newtheorem{theorem}                 {Theorem}      [section]
\newtheorem{thm}  [theorem]  {Theorem}
\newtheorem{prop}  [theorem]  {Proposition}
\newtheorem{cor}    [theorem]  {Corollary}
\theoremstyle{definition}
\newtheorem{eg}      [theorem]  {Example}
\newtheorem{rem}       [theorem]  {Remark}
\newtheorem{defn}   [theorem]  {Definition}
\numberwithin{equation}{section}
\def \g{\mathfrak{g}}
\def \k{\mathfrak{k}}
\def \p{\mathfrak{p}}
\def \SU{{\bf SU}}
\def \S{{\bf S}}
\def \U{{\bf U}}
\def \L{\mathscr{L}} 
\def \d{\mathrm{d}}
\def \dstar{\delta}
\def \ip#1{\langle#1\rangle}
\DeclareMathOperator{\diag}{diag}
\DeclareMathOperator{\Ad}{Ad}
\DeclareMathOperator{\ad}{ad}
\DeclareMathOperator{\musicd}{\sharp}
\newcommand{\AAA}{\mathscr{A}} 
\newcommand{\BB}{\mathscr{B}} 
\newcommand{\CC}{\mathscr{C}} 
\newcommand{\EE}{\mathsf{E}} 
\newcommand{\rr}{\mathscr{R}} 
\newcommand{\vv}{\mathscr{V}} 
\newcommand{\hh}{\mathscr{H}} 
\newcommand{\jj}{\mathscr{J}} 
\newcommand{\vphi}{\varphi} 
\newcommand{\less}{\backslash} 
\newcommand{\R}{{\mathbb{R}}} 
\newcommand{\Z}{{\mathbb{Z}}} 
\newcommand{\C}{{\mathbb{C}}} 
\newcommand{\I}{{\mathbb{I}}} 
\newcommand{\CP}{{\mathbb{C}}{{P}}}
\newcommand{\beq}{\begin{equation}} 
\newcommand{\eeq}{\end{equation}} 
\newcommand{\bea}{\begin{eqnarray}} 
\newcommand{\eea}{\end{eqnarray}} 
\newcommand{\ben}{\begin{eqnarray*}} 
\newcommand{\een}{\end{eqnarray*}} 
\newcommand{\ra}{\rightarrow}
\newcommand{\cd}{\partial} 
\newcommand{\wt}{\widetilde}
\newcommand{\thet}{{\vartheta}} 
\newcommand{\eps}{{\varepsilon}}
\newcommand{\ol}{\overline}
\newcommand{\tr}{{\rm tr}} 
\newcommand{\id}{{\rm Id}} 
\newcommand{\Id}{{\rm Id}}
\renewcommand{\phi}{\varphi}
\begin{document}\larger[2]\setlength{\baselineskip}{1.0\baselineskip}

\title[Supercurrent coupling in the Faddeev-Skyrme model]{Supercurrent coupling 
in the Faddeev-Skyrme Model}
\author[J.M.~Speight]{J.M.~Speight}
\subjclass[2000]{58E99, 81T99}
\address{School of Mathematics, University of Leeds, Leeds, LS2 9JT} 
\email{speight@maths.leeds.ac.uk }

\begin{abstract} 
Motivated by the sigma model limit of multicomponent Ginzburg-Landau theory,
a version of the Faddeev-Skyrme model is considered in which the scalar field
is coupled dynamically to a one-form field called the supercurrent. This
coupled model is investigated in the general setting where physical space is 
an oriented Riemannian manifold and the target space is a K\"ahler manifold.
It is shown that supercurrent coupling destroys the topological stability
enjoyed by the usual Faddeev-Skyrme model, so that there can be no
globally stable knot solitons in this model. Nonetheless, local energy
minimizers may still exist. The first variation formula is
derived and used to construct three families of static solutions of the model,
all on compact domains. In particular, a coupled version of the unit-charge
hopfion on a three-sphere of arbitrary radius is found.
The second variation formula is derived, and used
to analyze the stability of some of these solutions. A family of
stable solutions is identified, though these may exist only in spaces of even 
dimension. Finally, it is shown that, in contrast to the uncoupled model, the
coupled unit hopfion on the three-sphere of radius $R$ is {\em unstable} for
all $R$. This gives an explicit, exact example of supercurrent coupling
destabilizing a stable solution of the uncoupled Faddeev-Skyrme model, and
casts doubt on the conjecture of Babaev, Faddeev and Niemi that knot solitons
should exist in the low-energy regime of 
two-component superconductors.

\end{abstract}

\maketitle

\section{Introduction}

The Faddeev-Skyrme model is a nonlinear scalar field theory which possesses
so-called knot solitons, classified topologically by their Hopf degree. 
Motivated by the sigma model limit of two-component Ginzburg-Landau
theory, Babaev, Faddeev and Niemi conjectured that such knot solitons
may exist in the low energy regime of certain exotic superconductors
\cite{babfadnie}. However, the sigma model limit contains another
dynamical field besides the usual (two-sphere valued) scalar field of
the Faddeev-Skyrme model, a one-form field which may be interpreted 
physically as the supercurrent. 
In this
paper we study this extended version of the Faddeev-Skyrme model in which 
dynamical coupling to the supercurrent is taken into account, concentrating
primarily on the (analytically more accessible) case of compact physical
domain. We will see that supercurrent coupling destroys the 
topological stability enjoyed by knot solitons,
in that configurations of arbitrarily small energy can be found in every
homotopy class. This should be contrasted with the usual Faddeev-Skyrme model
where one has the Vakulenko-Kapitanski bound on $\R^3$ \cite{kapvak},
or a {\em linear}
energy bound on compact domains \cite{spesve2}. We will develop the
variational calculus for the coupled model in a rather general geometric
setting, and use our results to show explicitly and exactly
that supercurrent coupling destabilizes the
unit  charge ``hopfion'' on the three-sphere of small radius.

Consider static two-component Ginzburg-Landau theory on physical space 
$M=\R^3$. This field theory models, among other things, certain exotic
superconducting materials, including liquid metallic hydrogen, in which there
are two different species of charge-carrying Cooper pairs
\cite{ashbabsud}. It consists of
two complex scalar fields $\psi_a:M\ra\C$, $a=1,2$, minimally coupled to
a $U(1)$ gauge connexion $A\in\Omega^1(M)$, so that the total energy 
functional is
\beq
E_{GL}=\frac12\|\d_A\psi_1\|^2+\frac12\|\d_A\psi_2\|^2+\frac12\|\d A\|^2
+\int_MV(\psi_1,\psi_2)
\eeq
where $\d_A\psi_a=\d\psi-iA\psi$, $\|\cdot\|$ denotes $L^2$ norm and
$V$ is a phenomenologically determined interaction potential whose
details depend strongly on the precise physical context. To preserve gauge
invariance, one must have 
$V(\lambda\psi_1,\lambda\psi_2)\equiv V(\psi_1,\psi_2)$ for all 
$\lambda\in \U(1)$. Babaev, Faddeev
and Niemi \cite{babfadnie} made the following interesting observation
(the essential argument appeared somewhat earlier in 
a paper of Hindmarsh \cite{hin}, and perhaps goes back further still).
Define the total condensate density $\rho=\sqrt{|\psi_1|^2+|\psi_2|^2}$ and let 
$\vphi=[\psi_1,\psi_2]:M\ra\CP^1\equiv S^2$. Note that both these fields are 
gauge invariant and that the second field
makes sense globally only if $\psi_1,\psi_2$ never vanish simultaneously, 
presumably a sensible assumption provided $V(0,0)$ is made sufficiently large.
Further, let $\rho^2C$ be the total supercurrent of the condensates, that is,
\beq
C=\frac{i}{2\rho^2}\sum_{a=1}^2(\ol{\psi}_a\d_A\psi_a-\psi_a\ol{\d_A\psi_a}),
\eeq
which is also gauge invariant.
Then the Ginzburg-Landau energy is precisely
\beq
E_{GL}=\frac18\|\rho\d\vphi\|^2+\frac12\|\d C+\frac12\vphi^*\omega\|^2+
\frac12\|\d\rho\|^2+\frac12\|\rho C\|^2+\int_M\hat{V}(\rho,\vphi)
\eeq
where $\omega$ is the K\"ahler form on $\CP^1$ (equivalently, the area form on
$S^2$). The first two terms of this energy are strongly reminiscent of the
Faddeev-Skyrme energy of a $S^2$-valued field on $M$. In fact, if
$\rho$ is constant and $C=0$, $E_{GL}$ reduces precisely to $E_{FS}$,
the Faddeev-Skyrme
energy of $\vphi$. This led Babaev, Faddeev and Niemi to suggest that the
GL model, like its truncated version, possesses knot solitons, in which the
field $\phi:\R^3\ra S^2$ has nonzero Hopf degree.

The numerical evidence concerning this claim is a little mixed, but seems, on
the whole, to be negative \cite{niepalvir,war2,jayhiesal}. (For
a comprehensive review of the current status of knot solitons in field theory,
see \cite{radvol}.)
In particular, 
a crucial role in destabilizing localized $\phi$ configurations of nonzero
Hopf degree seems to be played by the coupling of $\vphi$ and $C$. That is,
we can imagine reducing $E_{GL}$ to $E_{FS}$ by a two-step truncation. In the
first step, we impose a sigma model limit on the $\C^2$ valued field
$(\psi_1,\psi_2)$, demanding that $\rho^2=|\psi_1|^2+|\psi_2|^2=\rho_0^2=1$
everywhere,
motivated by the choice $V=\lambda(1-|\psi_1|^2-|\psi_2|^2)^2$ in the limit
of large $\lambda$, for example.
(There is no loss of generality in the choice $\rho_0=1$ since any other 
$\rho_0$ can be scaled away by a homothety of $M$.) This yields a
``halfway house'' model, which one might call the {\em supercurrent coupled
Faddeev-Skyrme model} (henceforth, the SCFS model),
\beq\label{2}
E(\vphi,C)
=\frac18\|\d\vphi\|^2+\frac12\|\d C+\frac12\vphi^*\omega\|^2+\frac12\|C\|^2
\eeq
where, for simplicity, we have assumed the potential $V(\psi_1,\psi_2)$
has $U(2)$ symmetry (so is constant on surfaces of constant $\rho$). 
Experience with similar {\em ungauged} GL models \cite{batcoosut} suggests that
this truncation is unlikely to cause trouble (that is, if $E$ has minimizers
with nonzero Hopf degree, they probably survive the thawing of 
the field $\rho$).
The second step, where we set $C$ to 
zero, is more problematic. 
This amounts to ignoring the cross term $\frac12\langle\d C,
\vphi^*\omega\rangle$ in the second term of $E(\vphi,C)$, and there does not 
seem to be a strong justification for this. 

This motivates us to study the SCFS model (\ref{2}) in detail, and compare its 
properties with those of the usual Faddeev-Skyrme model
(henceforth, FS model). Since generalization
costs no extra effort, and working in a natural geometric context often
reveals structure otherwise hidden, we will study both the SCFS and FS
 models in the
general case where $\vphi:M\ra N$, $M$ being an oriented Riemannian manifold and
$N$ a K\"ahler manifold. We will still call $C$ the supercurrent, and 
interpret $\d C+\frac12\vphi^*\omega$ as the electromagnetic field two-form
(that is, magnetic field, if $\dim M=3$). Indeed, in the case $N=\CP^{k-1}$,
with the Fubini-Study metric of unit holomorphic sectional curvature,
this model is precisely the naive
sigma model limit of $k$-component Ginzburg-Landau
theory \cite{hin}, which gives it an immediate physical interpretation.

In section \ref{bounds} we compare the best known topological
lower energy bounds for
the FS and SCFS models in the case where $M$ has dimension 3 and $N=S^2$.
Whereas the FS energy is bounded below by some power of the Hopf degree
($Q^\frac34$ for $M=\R^3$, $Q$ for $M^3$ compact), we will find that the
SCFS energy can be arbitrarily small in every (algebraically inessential)
homotopy class. This is strikingly different behaviour, and it follows that
no nontrivial global minimizer of the SCFS energy can exist in this case.
One can still hope for local minimizers, however, and to find these one must
solve the variational problem for the SCFS energy.
In section \ref{firstvariation}, we 
compute the first variation formula
for $E(\vphi,C)$, that is, the field equations that a pair $(\vphi,C)$ must
satisfy in order to be a critical point of $E$.
For a given $\vphi:M\ra N$, we will show there is at most
one $C\in\Omega^1(M)$ so that $(\phi,C)$ is a critical point of $E$.
We shall call a critical point $(\phi,C)$ {\em embedded} if $\vphi$ is
a critical point of $E_{FS}$, and refer to $(\vphi,C)$ as an {\em embedding} of $\vphi$. Clearly an embedding of $\vphi$, if it exists, is unique.
We shall construct three families of embedded
critical points, all of which are submersive (that is, 
$\d\phi_x$ is surjective at each $x\in M$). To analyze the
{\em stability} of a critical point, one must consider the second variation
formula for $E$, to which we turn in section \ref{secondvariation}.
We compute an explicit formula for the Hessian operator associated
with a critical point, and use this formula to show that the critical
point $\phi=\id:N\ra N$, $C=0$ is stable for every K\"ahler manifold $N$.

Ideally, one would like to apply the first and second variation formulae
in the case of direct physical interest, namely $M=\R^3$, $N=S^2$, but
this case of the SCFS model is (like the FS model) analytically intractable.
It would be interesting, but very challenging, and beyond the scope of the 
present work, to conduct a large-scale
numerical analysis of this problem. In lieu of hard numerics, we will
consider the case nearest to $M=\R^3$ where exact results can be obtained,
namely $M=S^3_R$, the sphere of radius $R>0$. Here, as for $M=\R^3$,
configurations
are labelled homotopically by their Hopf degree $Q$, but (unlike on
$\R^3$) an explicit 
solution of the FS model in the $Q=1$ class is known, namely the
Hopf fibration $\phi:S^3\ra S^2$\cite{war}. This solution is known to be stable
for $0<R\leq 2$ \cite{spesve1}, so it can be thought of as the
spherical analogue of the unit ``hopfion'' at least
on small spheres. In section \ref{secondvariation}, 
we will
show that this hopfion has a (necessarily unique) embedding in the SCFS
model. The associated supercurrent is homogeneous and tangent to the 
fibres of the Hopf map. We go on to construct the Hessian operator
for this embedded hopfion explicitly, and show that it has a negative eigenvalue
of total multiplicity 4, for {\em all} $R>0$. 
This, then, provides an explicit, {\em exact} example
(albeit on a compact domain) of the process whereby supercurrent coupling
destabilizes a previously stable critical point of the Faddeev-Skyrme energy.
Clearly this does not conclusively rule out the existence of stable
knot solitons in the SCFS model,
but it is one more piece of evidence in favour of scepticism.

\section{Energy bounds}\label{bounds}

We first setup some notation and conventions,
following \cite{spesve1}.
Let $(M,g)$ be a Riemannian manifold and $(N,h,J)$ be a K\"ahler manifold,
with K\"ahler form $\omega(X,Y)=h(JX,Y)$. Let
$\langle\cdot,\cdot\rangle$ denote $L^2$ inner product,
$\|\cdot\|$ $L^2$ norm, $\Omega^k(M)$ the space of smooth $k$-forms on $M$,
$\Gamma(E)$ the space of smooth sections of vector bundle $E$,
$\dstar:\Omega^k\ra\Omega^{k-1}$ the coderivative $L^2$
adjoint to $\d$, and $\flat:TM\ra T^*M$ and $\sharp=\flat^{-1}$
the musical isomorphisms induced by the metric $g$ on $M$. 
Given $\phi:M\ra N$, $\phi^{-1}TN$ will denote the
vector bundle over $M$ whose fibre above $x\in M$ is $T_{\phi(x)}N$.
The symbol $\nabla$
will denote the Levi-Civita connexion on $TM$, $TN$ or its canonical extension 
to $T^*M\otimes\vphi^{-1}TN$ (which being clear
from context). The pullback of the Levi-Civita connexion on $TN$ to
$\phi^{-1}TN$ will be denoted $\nabla^\phi$.
All maps will be assumed smooth. We shall frequently need to
refer to specfic terms in the energy functional (\ref{2}), and so define, for
any
$\vphi:M\ra N$ and $C\in\Omega(M)$,
\ben
E(\vphi,C)&=&
\frac18\|\d\vphi\|^2+\frac12\|\d C+\frac12\vphi^*\omega\|^2+\frac12\|C\|^2\\
E_1(\vphi)&=&\frac12\|\d\vphi\|^2\\
E_2(\vphi)&=&\frac12\|\vphi^*\omega\|^2\\
E_3(C)&=&\frac12\|\d C\|^2+\frac12\|C\|^2\\
E_4(\vphi,C)&=&\frac12\ip{\d C,\vphi^*\omega}\\
E_{FS}(\vphi)&=&E_1(\vphi)+E_2(\vphi),
\een
so that $E=\frac14(E_1+E_2)+E_3+E_4$.

For the rest of this section,
$N=S^2$ and $M$ has dimension 3.\, We begin by contrasting the
energy bounds for
$E$ and $E_{FS}$ in the case
$M=\R^3$. We impose the usual boundary condition on $\phi$ 
($\phi(x)\ra (0,0,1)$
as $|x|\ra \infty$) so that field configurations are homotopic
if and only if they have the same Hopf degree $Q\in\Z$. Recall
that
\beq\label{hopfdegree}
Q=\frac{1}{(4\pi)^2}\int_M A\wedge\d A
\eeq
where $A\in\Omega^1(M)$ is chosen so that $\d A=\phi^*\omega$ (note that
$A$ certainly exists since $\phi^*\omega$ is closed and $H^2(M)=0$).
Then, for $E_{FS}$ we have the well-known Vakulenko-Kapitanski bound:

\begin{thm}[Vakulenko-Kapitanski \cite{kapvak}] There exists
a constant $c>0$ such that $E_{FS}(\phi)\geq c|Q(\phi)|^{\frac34}$
for all $\phi$.
\end{thm}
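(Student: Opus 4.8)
The plan is to bound the Hopf degree from above by a power of the energy and then invert. Write $F=\phi^*\omega$, so that $E_1(\phi)=\frac12\int_M|\d\phi|^2$ and $E_2(\phi)=\frac12\int_M|F|^2=\frac12\|F\|_{L^2}^2$, and recall that one may choose $A$ with $\d A=F$ and $Q=\frac{1}{(4\pi)^2}\int_M A\wedge F$. The target is an inequality of the form $|Q|\leq c\,E_{FS}(\phi)^{4/3}$, which on rearrangement is precisely $E_{FS}\geq c'|Q|^{3/4}$. The exponent $3/4$ will emerge automatically from tracking the scaling exponents through a chain of functional inequalities, so the whole argument is really a Sobolev estimate dressed up in the language of the Hopf invariant.

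The core is to control $\int_M A\wedge F$. First I would apply H\"older's inequality in the conjugate exponents $6$ and $6/5$ to get $\bigl|\int_M A\wedge F\bigr|\leq\|A\|_{L^6}\|F\|_{L^{6/5}}$. Next, fixing the Coulomb gauge $\dstar A=0$ (possible since $F$ is exact), the flat Weitzenb\"ock identity on $\R^3$ gives $\|\nabla A\|_{L^2}^2=\|\d A\|_{L^2}^2+\|\dstar A\|_{L^2}^2=\|F\|_{L^2}^2$, so the Sobolev embedding $W^{1,2}(\R^3)\hookrightarrow L^6(\R^3)$ yields $\|A\|_{L^6}\leq c\|F\|_{L^2}$. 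Finally, interpolating $L^{6/5}$ between $L^1$ and $L^2$ (the exponent $6/5$ corresponds to weight $2/3$ on $L^1$ and $1/3$ on $L^2$) gives $\|F\|_{L^{6/5}}\leq\|F\|_{L^1}^{2/3}\|F\|_{L^2}^{1/3}$. Multiplying the three estimates produces $|Q|\leq c\,\|F\|_{L^2}^{4/3}\|F\|_{L^1}^{2/3}$.

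It remains to convert the two norms of $F$ into the two pieces of the energy. The pointwise bound $|F|=|\phi^*\omega|\leq c_0|\d\phi|^2$ (the area of the image parallelogram is at most the product of its side lengths) gives $\|F\|_{L^1}\leq c_0\int_M|\d\phi|^2=2c_0E_1$, while by definition $\|F\|_{L^2}^2=2E_2$. Substituting, $|Q|\leq c\,(2E_2)^{2/3}(2c_0E_1)^{2/3}=c''(E_1E_2)^{2/3}$. The arithmetic--geometric mean inequality $E_1E_2\leq\frac14(E_1+E_2)^2=\frac14E_{FS}^2$ then delivers $|Q|\leq c'''E_{FS}^{4/3}$, and hence $E_{FS}(\phi)\geq c\,|Q(\phi)|^{3/4}$, as claimed.

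I expect the main obstacle to be the gauge-fixing and Sobolev step rather than the algebra of exponents. One must produce a genuine gauge potential $A$ with $\d A=F$ and $\dstar A=0$ carrying enough decay at infinity for the Sobolev embedding and the Weitzenb\"ock integration by parts to be legitimate; this in turn rests on the boundary condition $\phi(x)\to(0,0,1)$, which should force $F=\phi^*\omega$ to lie in $L^1\cap L^2$ with decay sufficient to solve for $A$ in the right function space. Making this functional-analytic framework precise (finite-energy configurations, density of smooth compactly supported fields, vanishing of boundary terms) is where the real work lies; the inequalities themselves are then routine.
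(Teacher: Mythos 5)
The paper does not prove this theorem---it quotes it from the cited reference \cite{kapvak}---so there is no internal proof to compare against, only the standard Vakulenko--Kapitanski argument from the literature. Your proposal correctly reconstructs exactly that argument (H\"older with exponents $6$ and $6/5$, the Coulomb-gauge Sobolev estimate $\|A\|_{L^6}\leq c\|F\|_{L^2}$, the interpolation $\|F\|_{L^{6/5}}\leq\|F\|_{L^1}^{2/3}\|F\|_{L^2}^{1/3}$, the pointwise bound $|\phi^*\omega|\leq c_0|\d\phi|^2$, and AM--GM), with all exponents checking out; this is precisely the ``subtle trade-off between $E_1$ and $E_2$'' using ``more advanced estimates from functional analysis'' that the paper alludes to when contrasting this bound with its own linear bound on compact domains.
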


The power $\frac34$ is believed to be sharp, and is certainly consistent
with numerics. The optimal constant $c$ is not known. In contrast
to this energy growth with $|Q|$, we will now see that the infimum
of $E(\phi,C)$ is zero in every homotopy class. This fact is already known
in the physics literature,
at least informally, in the multicomponent Ginzburg-Landau setting
\cite{jayhiesal,radvol}. The point is that, for a given $\phi$, we can
choose $\d C$ to exactly cancel $\frac12\phi^*\omega$ in $E$, so that
$(\phi,C)$ has no stability against Derrick scaling.

\begin{prop}\label{doh1}
 $\inf\{E(\phi,C)\: :\: Q(\phi)=n\}=0$ for all $n\in \Z$.
\end{prop}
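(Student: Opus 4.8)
The plan is to exploit exactly the mechanism flagged just before the statement: for any fixed $\phi$ the ``magnetic'' term $\frac12\|\d C+\frac12\phi^*\omega\|^2$ can be annihilated by a single choice of $C$, after which no term in $E$ grows under spatial contraction, so a Derrick-type dilation drives the energy to zero while fixing the Hopf degree. Since $E$ is manifestly a sum of squared $L^2$ norms, $E\ge0$, and it therefore suffices to produce, for each $n$, a sequence in the class $Q=n$ with energy tending to $0$.

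First I would fix any smooth $\phi_0:\R^3\to S^2$ with the standard boundary value and $Q(\phi_0)=n$; such a representative exists for every $n\in\Z$. As $\phi_0^*\omega$ is closed and $H^2(\R^3)=0$, I choose $A\in\Omega^1(\R^3)$ with $\d A=\phi_0^*\omega$ and set $C_0=-\tfrac12A$, so that $\d C_0+\tfrac12\phi_0^*\omega=0$ and the middle term of $E$ vanishes, leaving
\[
E(\phi_0,C_0)=\tfrac18\|\d\phi_0\|^2+\tfrac18\|A\|^2 .
\]
Note that killing this term simultaneously removes the hidden $\tfrac14E_2(\phi_0)=\tfrac18\|\phi_0^*\omega\|^2$ contribution, i.e.\ precisely the Skyrme-type term that would otherwise resist shrinking.

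The essential step is the scaling. Writing $s_\lambda(x)=x/\lambda$, I set $\phi_\lambda=\phi_0\circ s_\lambda$ and $C_\lambda=s_\lambda^*C_0$. Naturality of $\d$ and of pullback gives $\d C_\lambda+\tfrac12\phi_\lambda^*\omega=s_\lambda^*(\d C_0+\tfrac12\phi_0^*\omega)=0$, so the magnetic term stays zero for all $\lambda>0$. On a three-dimensional domain the $L^2$ norm of a $k$-form rescales by $\lambda^{3-2k}$ under $s_\lambda^*$; hence $\|\d\phi_\lambda\|^2=\lambda\|\d\phi_0\|^2$ (with $k=1$) and $\|C_\lambda\|^2=\tfrac14\lambda\|A\|^2$, so $E(\phi_\lambda,C_\lambda)=\lambda\,E(\phi_0,C_0)$. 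Because $\phi_\lambda$ is homotopic to $\phi_0$ and keeps the same boundary value, $Q(\phi_\lambda)=n$, and letting $\lambda\to0^+$ gives $E(\phi_\lambda,C_\lambda)\to0$. Combined with $E\ge0$ this yields the claimed infimum $0$.

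The one point needing care --- the main obstacle --- is the analysis on the non-compact domain: I must check that $A$ can be chosen with finite $\|A\|$ and in a gauge whose scaling is genuinely $C_\lambda=s_\lambda^*C_0$. A Coulomb-gauge (co-closed) potential for a localised $\phi_0^*\omega$ decays of order $|x|^{-2}$, so $|A|^2$ is integrable at infinity on $\R^3$, and this gauge choice is compatible with the dilation. Granting this, every surviving term is homogeneous of the same positive degree in $\lambda$, which is exactly why $\dim M=3$ (indeed any $\dim M\ge3$) lets the energy be sent to zero.
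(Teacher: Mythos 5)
Your proposal is correct in essence and rests on exactly the same mechanism as the paper's proof: choose $C$ so that $\d C+\frac12\phi^*\omega=0$, note that the surviving terms $\frac18\|\d\phi\|^2+\frac12\|C\|^2$ scale with a positive power of the dilation parameter, and run the Derrick scaling argument inside a fixed Hopf class. The one point where you genuinely diverge is how finiteness of $\|C\|$ is secured on the noncompact domain. The paper handles this cohomologically: it takes $\phi$ constant outside a closed ball $B$ (every class contains such a representative), so $\phi^*\omega$ vanishes on $\R^3\less B$; since $H^1(\R^3\less B)=0$, the primitive $C$ is exact there and can be shifted by an exact form to vanish identically outside $B$, making both finiteness and the scaling computation trivial. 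You instead invoke a Coulomb-gauge/Biot--Savart decay estimate $|A|=O(|x|^{-2})$, which is valid and does give $A\in L^2(\R^3)$ --- but only once $\phi_0^*\omega$ is genuinely localised.

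That caveat is the loose end in your write-up: your opening step, ``fix \emph{any} smooth $\phi_0$ with the standard boundary value,'' is too liberal. The pointwise boundary condition $\phi_0(x)\ra(0,0,1)$ alone guarantees neither $\|\d\phi_0\|<\infty$ nor the decay of $\phi_0^*\omega$ that your potential-theory estimate needs, so the representative must be chosen with some care --- e.g.\ constant outside a compact set, exactly as in the paper. With that (cost-free) adjustment your argument is complete; indeed, for such a $\phi_0$ you could even choose $C_0$ compactly supported (since $H^2_c(\R^3)=0$), which bypasses the decay estimate entirely and essentially reproduces the paper's gauge-fixing step.
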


\begin{proof}
Each degree class contains $\vphi$ with $\vphi=(0,0,1)$ outside some
closed ball $B$. For this $\vphi$,
there exists $C\in\Omega^1(\R^3)$ such that
$\d C=-\frac12\vphi^*\omega$ (since $H^2(\R^3)=0$). Since $\d\phi_x=0$
for all $x\notin{B}$, $\phi^*\omega$ vanishes on $\R^3\less{B}$. 
Without loss of generality, we may assume that $C$ itself vanishes
on $\R^3\less B$. (Assume $C$ does not vanish on $\R^3\less B$.
$C$ is closed on $\R^3\less{B}$ and hence there exists $f:\R^3\less{B}\ra
\R$ such that $C=\d f$, since $H^1(\R^3\less B)=0$. We may smoothly
extend $f$ to a function on the whole of $\R^3$. Let $C'=C-\d f$.
Then $\d C'=\d C=-\frac12\vphi^*\omega$ and $C'$ vanishes on $\R^3\less B$.)
For each $\lambda>0$ let $D_\lambda:\R^3\ra\R^3$ be the dilation map
$D_\lambda(x)=\lambda x$. Clearly $\phi\circ D_\lambda$ is homotopic to
$\phi$ and, by
 the usual Derrick argument \cite{der},
$$
E(\vphi_\lambda,C_\lambda)=\frac{1}{8\lambda}\|\d\vphi\|^2+0+
\frac{1}{2\lambda}\|C\|^2\ra 0
$$
as $\lambda\ra\infty$. 
\end{proof}

We turn now to the case where $M$ is a compact oriented 3-manifold. The 
homotopy classification of maps $M\ra S^2$ was completed by Pontrjagin.

\begin{thm}[Pontrjagin \cite{pon}] 
\label{ponthm}
Let $M$ be a compact, connected, oriented 3-manifold, and $\mu$
be a generator of $H^2(S^2;\Z)\cong\Z$ (for example,
$\mu=\omega/4\pi$ in the de Rham model). 
The homotopy classes of based maps $\phi:M\ra S^2$ fall into
disjoint families labelled by $[\phi^*\mu]\in H^2(M;\Z)$. Within the
family of classes with fixed $[\phi^*\mu]$, the classes are labelled by
elements of $H^3(M;\Z)/(2[\phi^*\mu]\smile H^1(M;\Z))$.
\end{thm}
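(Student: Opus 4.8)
The plan is to prove this via the Pontrjagin--Thom construction, which identifies based homotopy classes of maps $\phi:M\ra S^2$ with framed cobordism classes of framed links in $M$. First I would fix a regular value $p\in S^2$ together with an oriented framing of $T_pS^2$; then $L=\phi^{-1}(p)$ is an embedded oriented $1$-manifold whose rank-$2$ normal bundle inherits a trivialization $f$ by pulling back the framing of $T_pS^2$ along $\d\phi$. A standard transversality argument shows that the framed cobordism class of $(L,f)$ in $M$ depends only on the homotopy class of $\phi$, that homotopic maps give cobordant framed links and conversely, and that every framed link arises in this way. This reduces the entire classification to the computation of framed cobordism classes of framed links in $M$.

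The primary invariant is the homology class $[L]\in H_1(M;\Z)$, which is visibly a cobordism invariant. Since $\mu$ is Poincar\'e dual to a point of $S^2$, naturality of duality gives $[\phi^*\mu]=\mathrm{PD}[L]\in H^2(M;\Z)$. Every class in $H_1(M;\Z)$ is represented by an embedded oriented $1$-manifold, and an oriented rank-$2$ bundle over a $1$-complex is trivial, so every value of $[\phi^*\mu]$ is realized and two maps with distinct $[\phi^*\mu]$ cannot be homotopic. This accounts for the decomposition of $[M,S^2]$ into the disjoint families labelled by $[\phi^*\mu]\in H^2(M;\Z)$.

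Fixing the primary class $\alpha=[\phi^*\mu]$, I would then classify the remaining freedom by obstruction theory, which meshes cleanly with the cobordism picture. Two maps $\phi_0,\phi_1$ with $\phi_0^*\mu=\phi_1^*\mu$ may be homotoped to agree on the $2$-skeleton $M^{(2)}$, after which their restrictions to each $3$-cell glue along the common boundary to a map $S^3\ra S^2$, defining a difference cochain valued in $\pi_3(S^2)\cong\Z$ and hence a secondary class in $H^3(M;\Z)$. Because $M$ is $3$-dimensional this is the last obstruction, so the vanishing of the secondary class is exactly the condition $\phi_0\simeq\phi_1$, and completeness of the invariant follows directly.

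The crux, and the step I expect to be the main obstacle, is pinning down the indeterminacy in the secondary class arising from the freedom in the homotopy on $M^{(2)}$. Altering that homotopy by a class $\beta\in H^1(M;\pi_2S^2)=H^1(M;\Z)$ changes the difference class by the image of $\beta\smile\alpha$ under the coefficient homomorphism
\[
\pi_2(S^2)\otimes\pi_2(S^2)\ra\pi_3(S^2),\qquad a\otimes b\mapsto[a,b],
\]
induced by the Whitehead product. Since the Whitehead square $[\iota_2,\iota_2]$ equals $2\eta$, where $\eta$ generates $\pi_3(S^2)$, this operation is multiplication by $2$, and the indeterminacy subgroup is precisely $2\alpha\smile H^1(M;\Z)=2[\phi^*\mu]\smile H^1(M;\Z)$. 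Geometrically the factor of $2$ is the self-intersection doubling visible when a link component is dragged around a loop during a framed cobordism, each of the two framing vectors contributing one linking. Verifying that this is the full indeterminacy, and that no quadratic correction in $\alpha$ intervenes (it cannot, as $Sq^2$ vanishes on $H^1$), is the delicate point; granting it, the secondary invariant is well defined in $H^3(M;\Z)/(2[\phi^*\mu]\smile H^1(M;\Z))$ and completely labels the classes within each family, as asserted.
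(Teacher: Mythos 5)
The paper offers no proof of this statement---it is quoted, with attribution, as Pontrjagin's classical classification theorem---so there is no internal argument to compare yours against; your proposal has to be judged on its own, and it is essentially correct, following what is now the standard route. The Pontrjagin--Thom step (based classes of maps $M\ra S^2$ correspond to framed cobordism classes of framed oriented links, with $[\phi^*\mu]$ Poincar\'e dual to the link class) correctly yields the splitting into families indexed by $H^2(M;\Z)$, and obstruction theory is the right mechanism for the secondary classification. Two points are worth making explicit. First, homotoping $\phi_0,\phi_1$ with equal primary classes to agree on $M^{(2)}$ uses that $H^2(M;\Z)\ra H^2(M^{(2)};\Z)$ is injective (equivalently $H^2(M,M^{(2)};\Z)=0$, since $M/M^{(2)}$ is a wedge of $3$-spheres), so equality of the classes on $M$ really does kill the primary difference obstruction. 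Second, your ``delicate point''---that the indeterminacy is exactly $2[\phi^*\mu]\smile H^1(M;\Z)$, with no further corrections---has a cleaner resolution than the $\mathrm{Sq}^2$ remark: model maps from a $3$-complex to $S^2$ as lifts through the principal fibration $K(\Z,3)\ra P_3(S^2)\ra K(\Z,2)$, whose $k$-invariant is the generator $\iota^2$; a self-homotopy of $\alpha$ classified by $\beta\in H^1(M;\Z)$ is the class $\alpha+\beta\smile dt$ on $M\times S^1$, and pulling back $\iota^2$ gives $\alpha^2+2\,\alpha\smile\beta\smile dt$, so the lift changes by exactly $2\,\alpha\smile\beta$ and nothing else (the $\beta$-quadratic term dies because $dt\smile dt=0$). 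This reproduces the factor of $2$ in your Whitehead-square argument ($[\iota_2,\iota_2]=2\eta$) and settles completeness of the invariant, since $H^3(M;\Z)$ acts transitively on vertical homotopy classes of lifts. For the record, Pontrjagin's original proof stays entirely inside framed cobordism, defining the secondary invariant by comparing framings of homologous links and exhibiting the $2$ as the self-linking effect you describe geometrically; your hybrid (geometric primary invariant, obstruction-theoretic secondary invariant) trades that geometry for a sharper algebraic identification of the indeterminacy subgroup.
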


If $H^2(M;\Z)\neq 0$,therefore,
 maps $M\ra S^2$ are not classified homotopically by a single
integer. We will be interested primarily in the case $M=S^3$, where this
complication does not arise. However, even on general $M$, there is a family
of maps which are classified homotopically by a single integer, the
{\em algebraically inessential} maps, that is, those
for which $[\phi^*\mu]=0$. These maps 
fall into homotopy classes labelled by $H^3(M;\Z)\equiv \Z$, and one
may identify this integer homotopy invariant with the Hopf degree of
$\phi$, defined as in (\ref{hopfdegree}). From the standpoint of topological
solitons, algebraically inessential maps are the most interesting, since
all other maps have the property that regular preimages $\phi^{-1}(p)\subset
M$ are nontrivial in $H_1(M)$ (one may think of such preimages as being
Poincar\'e dual to $[\phi^*\mu]\in H^2(M;\Z)$), so they are not, in a 
topological sense, spatially localized (they are ``tied'' to some nontrivial 
1-cycle in $M$). 

So, on a compact domain, provided we restrict to the case of algebraically
inessential maps (no restriction if $H^2(M;\Z)=0$), configurations
$\phi:M\ra S^2$ are still classified by their Hopf degree $Q(\phi)$.
An interesting fact, which does not seem to have been appreciated
previously, is that the Faddeev-Skyrme energy $E_{FS}$ grows at
least {\em linearly} with $|Q|$ in this case, in contrast to the $|Q|^{\frac34}$
growth found on $\R^3$. The essential proof has appeared previously for the
case $M=S^3$ \cite{spesve2}, but adapts readily to the case of general $M$.

\begin{thm}[Speight-Svensson] 
Let $M$ be a compact, connected, oriented Riemannian 3-manifold and
$\phi:M\ra S^2$ be a smooth, algebraically inessential map. Then
$$
E_{FS}(\phi)>E_2(\phi)\geq \frac{\sqrt{\lambda_1}}{32\pi^2}|Q(\phi)|
$$
where $\lambda_1>0$ is the first nonzero eigenvalue of the Laplacian
restricted to coexact one-forms on $M$.
\end{thm}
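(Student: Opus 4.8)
The plan is to treat the two inequalities separately. The first, $E_{FS}(\phi)>E_2(\phi)$, is immediate: since $E_{FS}=E_1+E_2$ with $E_1(\phi)=\frac12\|\d\phi\|^2>0$ for any nonconstant $\phi$, and any $\phi$ with $Q(\phi)\neq0$ is necessarily nonconstant, the strict inequality holds in the only case of interest (when $Q=0$ the asserted lower bound is vacuous). So the substance is the linear bound $E_2(\phi)\ge\frac{\sqrt{\lambda_1}}{32\pi^2}|Q(\phi)|$, which I would prove by realising the Hopf invariant as a helicity integral and controlling it by the spectral gap of the Hodge Laplacian on coexact $1$-forms.

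First I would exploit the hypothesis that $\phi$ is algebraically inessential. This means $[\phi^*\mu]=0$ in $H^2(M;\Z)$, hence $[\phi^*\omega]=0$ in $H^2(M;\R)$, so the closed $2$-form $\phi^*\omega$ is exact: there is $A\in\Omega^1(M)$ with $\d A=\phi^*\omega$, and $Q=\frac{1}{(4\pi)^2}\int_M A\wedge\d A$. The form $A$ is determined only up to a closed $1$-form, and I would use this gauge freedom, via the Hodge decomposition $A=\d\alpha+\delta\beta+\gamma$ (with $\gamma$ harmonic), to replace $A$ by its coexact part $a:=\delta\beta$. Since $\d A=\d a=\phi^*\omega$, and since $\int_M\d\alpha\wedge\d a=\int_M\d(\alpha\wedge\d a)=0$ and $\int_M\gamma\wedge\d a=-\int_M\d(\gamma\wedge a)=0$ by Stokes' theorem on the closed manifold $M$ (using $\d\gamma=0$), the helicity is unchanged: $Q=\frac{1}{(4\pi)^2}\int_M a\wedge\d a$, where now $a$ is coclosed, $\delta a=0$.

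The key estimate then combines a duality identity with Cauchy--Schwarz and the spectral gap. On an oriented $3$-manifold $\int_M a\wedge\d a=\langle a,\ast\,\d a\rangle$, because $\ast\ast=\mathrm{id}$ on $2$-forms and $\alpha\wedge\ast\eta=\langle\alpha,\eta\rangle\,(\ast1)$ for $1$-forms; Cauchy--Schwarz together with the fact that $\ast$ is an $L^2$ isometry gives $|\int_M a\wedge\d a|\le\|a\|\,\|\d a\|$. Because $a$ is coexact, hence $L^2$-orthogonal to the harmonic $1$-forms, and coclosed, one has $\langle\Delta a,a\rangle=\|\d a\|^2\ge\lambda_1\|a\|^2$, so $\|a\|\le\lambda_1^{-1/2}\|\d a\|$. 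Substituting and recalling $E_2(\phi)=\frac12\|\phi^*\omega\|^2=\frac12\|\d a\|^2$ bounds $|Q|$ by a constant multiple of $\lambda_1^{-1/2}E_2(\phi)$, i.e. $E_2(\phi)\ge c\sqrt{\lambda_1}\,|Q(\phi)|$, with the explicit coefficient following from the normalisation $\int_{S^2}\omega=4\pi$ in the Hopf integral.

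I expect the only real subtlety to be the gauge/Hodge step: one must check that passing to the coexact representative $a$ neither changes $\d a$ nor alters the helicity integral, and that the resulting $a$ genuinely lies in the subspace to which the bound $\lambda_1$ applies (coexact, hence orthogonal to harmonic forms, so that the Rayleigh quotient is at least $\lambda_1$). Everything else --- the duality identity, Cauchy--Schwarz, and the variational characterisation of $\lambda_1$ --- is routine. The adaptation from $M=S^3$ to a general compact oriented $M$ costs nothing here, since exactness of $\phi^*\omega$ is exactly what the algebraic-inessentiality hypothesis provides, and no step of the argument used $H^1(M)=0$.
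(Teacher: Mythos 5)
Your proof is correct and takes essentially the same route as the paper's: exactness of $\phi^*\omega$ from algebraic inessentiality, passage to a coexact potential via the Hodge decomposition, the Cauchy--Schwarz estimate on the helicity integral $\langle a,\ast\d a\rangle$, and the spectral gap $\|\d a\|^2\geq\lambda_1\|a\|^2$ on coexact one-forms. Your explicit check that the helicity integral is unchanged when passing to the coexact representative, and your handling of the $Q=0$/constant-map edge case, are details the paper leaves implicit, but the argument is the same.
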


\begin{proof}
Since $\phi$ is algebraically inessential, $\phi^*\omega$ is exact, and there
exists $A\in\Omega^1(M)$ such that $\d A=\phi^*\omega$. By
Hodge decomposition, we may assume, without loss of generality
that $A$ is coexact. (Consider the
Hodge decomposition of $A$, $A=A_h+\d A_0 +\dstar A_2$,
where $A_h$ is a harmonic one-form, $A_0$ is a zero-form and 
$A_2$ a two-form. Then $\phi^*\omega=\d A =0+0+\d\dstar A_2$.) We
may also assume that $Q(\phi)>0$. (The bound is trivial if $Q=0$, and if 
$Q(\phi)<0$ then $Q(\phi\circ\rho)=-Q(\phi)$ where $\rho:M\ra M$
is an orientation reversing diffeomorphism.) Then
\ben
E_2(\phi)&=&\frac12\|\phi^*\omega\|^4=\frac12\ip{\d A,\d A}
=\frac12\ip{A,\Delta A}\geq \frac12\lambda_1\|A\|^2.
\een
But
\ben
(4\pi)^2Q(\phi)&=&\ip{A,\ast\d A}\leq \|A\|\|\d A\|=\|A\|\sqrt{2E_2(\phi)}.
\een
The result follows immediately.
\end{proof}

Note that this bound follows from a straightforward estimate of just
the Skyrme term $E_2$ in $E_{FS}$, using 
only the Cauchy-Schwartz inequality and ellipticity of $\Delta$.
By contrast, the Vakulenko-Kapitanski
bound involves a subtle trade-off between $E_1$ and $E_2$, and uses 
rather more advanced estimates from functional analysis. Note also that
under a homothety $g\mapsto R^2 g$, the spectrum of $\Delta$ scales as 
$\lambda\mapsto \lambda/R^2$, so our bound becomes trivial in the
limit where $M$ attains infinite volume (for example $M=S^3_R$, the 
three-sphere of radius $R$, in the limit $R\ra \infty$). It is
an interesting and (apparently) open question whether the power $|Q|^1$
in this bound is optimal.

Once again, the issue of primary interest in this paper is the effect
that supercurrent coupling has on this energy bound. In fact, as for the
model on $\R^3$, the bound disappears entirely in the SCFS model. 

\begin{prop}\label{doh2}
 Let $M$ be a compact, connected, oriented 3-manifold. 
Then for each $n\in\Z$,
$\inf E(\phi,C)=0$, where the infimum is taken over all smooth,
algebraically inessential maps $\phi:M\ra S^2$ with $Q(\phi)=n$, and all
$C\in\Omega^1(M)$.
\end{prop}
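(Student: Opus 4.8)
The plan is to mimic the Derrick-scaling argument of Proposition \ref{doh1}, but since $M$ has finite volume we cannot spread a configuration out; instead we \emph{concentrate} a fixed degree-$n$ configuration into a shrinking coordinate ball. Fix $p\in M$ and an orientation-preserving chart $\chi:B_2(0)\ra M$ with $\chi(0)=p$, where $B_r(0)\subset\R^3$ denotes the Euclidean ball of radius $r$. Choose a smooth map $\psi:\R^3\ra S^2$ of Hopf degree $n$ which equals $(0,0,1)$ outside $B_1(0)$; every Hopf class contains such a representative. Exactly as in Proposition \ref{doh1}, $\psi^*\omega$ is closed and supported in $B_1(0)$, and since $H^1(\R^3\less B_1(0))=0$ one may choose a one-form $a$ on $\R^3$, also supported in $B_1(0)$, with $\d a=\psi^*\omega$.

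Next, for $\eps\in(0,1)$ and $D_\eps(y)=y/\eps$, I would define $\phi_\eps$ and $C_\eps$ on $\chi(B_1(0))$ by reading $\psi\circ D_\eps$ and $-\tfrac12 D_\eps^*a$ in the chart, extending by $(0,0,1)$ and $0$ respectively to the rest of $M$. Because $\psi$ is constant and $a$ vanishes outside $B_1(0)$, both $\psi\circ D_\eps$ and $D_\eps^*a$ are supported in $B_\eps(0)$, so the extensions are smooth across $\chi(\partial B_\eps(0))$ and $(\phi_\eps,C_\eps)$ is an admissible pair. By construction $\phi_\eps^*\omega$ is exact, so $\phi_\eps$ is algebraically inessential, and its Hopf degree equals that of $\psi$, namely $n$: the defining integral (\ref{hopfdegree}) localises to $\chi(B_\eps(0))$ and is invariant under $D_\eps$. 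Moreover $\d D_\eps^*a=D_\eps^*\psi^*\omega=(\psi\circ D_\eps)^*\omega$, hence $\d C_\eps=-\tfrac12\phi_\eps^*\omega$ on all of $M$ and the middle term of $E$ vanishes identically, leaving $E(\phi_\eps,C_\eps)=\tfrac18\|\d\phi_\eps\|^2+\tfrac12\|C_\eps\|^2$.

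It then remains to show that both surviving terms tend to $0$ as $\eps\ra0$, and this scaling estimate on a curved domain is the only point requiring genuine care beyond the topological input already used in Proposition \ref{doh1}. Read in the chart, the metric $g$ is uniformly comparable on the fixed compact set $\overline{B_1(0)}$ to the Euclidean metric, so the induced norms on forms and the volume element differ from their Euclidean counterparts by uniformly bounded factors; since every support lies in $B_\eps(0)\subset B_1(0)$, it suffices to compute the Euclidean scalings. There $\d(\psi\circ D_\eps)=\eps^{-1}(\d\psi)\circ D_\eps$ and $(D_\eps^*a)_i=\eps^{-1}a_i\circ D_\eps$, so substituting $y=\eps z$ (whence $\d^3y=\eps^3\d^3z$) gives $\|\d\phi_\eps\|^2=O(\eps)$ and $\|C_\eps\|^2=O(\eps)$, both tending to $0$. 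As $E\geq0$ always, the existence of admissible pairs of arbitrarily small energy in each class forces $\inf E(\phi,C)=0$, as claimed.
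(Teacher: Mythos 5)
Your argument is, in substance, the paper's own proof written out in full: the paper takes a representative of the class that is constant outside an arbitrarily small closed ball and applies the concentration argument of Proposition \ref{doh1} to it, which is precisely your chart transplantation, the exact cancellation $\d C_\eps=-\tfrac12\phi_\eps^*\omega$, and the $O(\eps)$ scaling estimate; the uniform comparability of $g$ with the Euclidean metric on $\overline{B_1(0)}$ is the point the paper's two-line proof leaves implicit, and you handle it correctly.

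There is, however, one step whose justification is wrong as stated: ``$\phi_\eps^*\omega$ is exact, so $\phi_\eps$ is algebraically inessential.'' In the paper, algebraic inessentiality means $[\phi^*\mu]=0$ in \emph{integral} cohomology $H^2(M;\Z)$, whereas exactness of $\phi^*\omega$ only says that the image of this class in de Rham cohomology vanishes, i.e.\ that $[\phi^*\mu]$ is killed modulo torsion. The implication fails in general: on a lens space $H^2(M;\Z)$ is pure torsion, so $\phi^*\omega$ is exact for \emph{every} map $\phi:M\ra S^2$, yet maps with $[\phi^*\mu]\neq0$ exist, and these are excluded from the infimum in the proposition. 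Your maps are nevertheless genuinely algebraically inessential, for a different reason: $\phi_\eps$ is constant on the closed set $M\less\chi(B_\eps(0))$, so it factors (as a continuous map) through the collapse $M\ra M/\bigl(M\less\chi(B_\eps(0))\bigr)\cong S^3$, and since $H^2(S^3;\Z)=0$ the pulled-back class $[\phi_\eps^*\mu]$ vanishes integrally. With that one-sentence repair, together with your verification via (\ref{hopfdegree}) that $Q(\phi_\eps)=n$, the proof is complete and coincides with the paper's.
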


\begin{proof} Each algebraically inessential homotopy class contains maps
which are constant outside some (arbitrarily small) closed ball. The argument
of the proof of Proposition \ref{doh1} can be applied to such maps.
\end{proof}

So there can be no global minimizer of $E(\phi,C)$ in any nontrivial
algebraically inessential homotopy class. It does not follow that
the SCFS model can have no stable static solutions, however, since local
minima may exist. To seek them, we require the first variation formula for
$E$.

\section{The first variation}\label{firstvariation}

In this section and the next, we revert to the general setting in which $M$
is an oriented Riemannian manifold and $N$ is a K\"ahler manifold.

\begin{prop}\label{var1}
 Let $\vphi_t$ be a variation of $\vphi_0=\vphi:M\ra N$ and
$C_t$ a variation of $C_0=C\in\Omega^1(M)$. Let $X=\cd_t\vphi|_{t=0}\in
\Gamma(\vphi^{-1}TN)$ and $Y=\cd_t C_t|_{t=0}\in\Omega^1(M)$, both 
assumed to be of compact 
support (if $M$ is noncompact). Then the corresponding first variation
of $E$ is
$$
\left.\frac{dE(\vphi_t,C_t)}{dt}\right|_{t=0}=
\ip{X,-\frac14(\tau(\phi)+J\d\vphi\sharp\dstar(\vphi^*\omega+2\d C))}+
\ip{Y,\dstar(\d C+\frac12\vphi^*\omega)+C}
$$
where $\tau(\phi)$ is the
tension field of the mapping $\vphi$, that is, $\tau(\vphi)=\tr\nabla\d\vphi$.
\end{prop}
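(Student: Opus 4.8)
The plan is to exploit the linear decomposition $E=\frac14(E_1+E_2)+E_3+E_4$ recorded above and vary each piece separately, since $X$ and $Y$ enter the four terms in a largely decoupled way. The term $E_1(\vphi)=\frac12\|\d\vphi\|^2$ is the standard harmonic map energy, whose first variation is classical: differentiating under the integral, commuting $\nabla^\vphi_{\partial_t}$ with $\nabla^\vphi_U$ (torsion-freeness) and integrating by parts gives $\left.\frac{d}{dt}E_1\right|_{t=0}=-\ip{X,\tau(\vphi)}$ with $\tau(\vphi)=\tr\nabla\d\vphi$. The term $E_3(C)$ depends only on $C$, and since the metric and volume form on $M$ are fixed, $\left.\frac{d}{dt}E_3\right|_{t=0}=\ip{\d C,\d Y}+\ip{C,Y}=\ip{Y,\dstar\d C+C}$ after one integration by parts, using that $\dstar$ is $L^2$-adjoint to $\d$. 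Thus the only genuinely new ingredient is the variation of $\vphi^*\omega$, which controls both $E_2$ and the $\vphi$-dependence of the cross term $E_4$.

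The key lemma is that $\left.\frac{d}{dt}\vphi_t^*\omega\right|_{t=0}=\d\alpha_X$, where $\alpha_X\in\Omega^1(M)$ is the pullback one-form $\alpha_X(U)=\omega(X,\d\vphi(U))=h(JX,\d\vphi(U))$. I would prove this by passing to the homotopy $\Phi:M\times(-\eps,\eps)\ra N$, $\Phi(x,t)=\vphi_t(x)$, and observing that $\frac{d}{dt}\vphi_t^*\omega$ is the restriction to the slice $M\times\{t\}$ of the Lie derivative along $\partial_t$ of $\Phi^*\omega$. Since $\omega$ is closed, so is $\Phi^*\omega$, and Cartan's formula reduces this Lie derivative to $\d(\iota_{\partial_t}\Phi^*\omega)$; on restriction to the slice the exterior derivative on $M\times(-\eps,\eps)$ descends to $\d$ on $M$, while $\iota_{\partial_t}\Phi^*\omega$ restricts to $\alpha_X$. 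Note that only $\d\omega=0$ is needed here, not the full K\"ahler condition.

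With the lemma in hand the remaining steps are bookkeeping. Differentiating $E_2$ and the $\vphi$-part of $E_4$ and integrating by parts gives a combined $X$-contribution $\frac14\ip{\dstar(\vphi^*\omega+2\d C),\alpha_X}$, while the $C$-part of $E_4$ contributes $\frac12\ip{Y,\dstar\vphi^*\omega}$. The one substantive manipulation left is to trade the pairing against $\alpha_X$ for a pairing against $X$: writing $\eta=\dstar(\vphi^*\omega+2\d C)$ and using $\omega(X,\cdot)=h(JX,\cdot)=-h(X,J\cdot)$, the pointwise identity $g(\eta,\alpha_X)=\alpha_X(\eta^\sharp)=-h(X,J\d\vphi(\eta^\sharp))$ yields $\ip{\eta,\alpha_X}=\ip{X,-J\d\vphi\sharp\,\eta}$. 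Collecting the $X$-terms with the $E_1$ contribution produces $\ip{X,-\frac14(\tau(\vphi)+J\d\vphi\sharp\dstar(\vphi^*\omega+2\d C))}$, and collecting the $Y$-terms from $E_3$ and $E_4$ produces $\ip{Y,\dstar(\d C+\frac12\vphi^*\omega)+C}$, as claimed. The main obstacle is the key lemma together with this final algebraic identification of the $J\d\vphi\sharp$ term; everything else is routine integration by parts, and the compact-support hypothesis on $X,Y$ is exactly what makes all boundary terms vanish.
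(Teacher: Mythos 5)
Your proposal is correct and follows essentially the same route as the paper: the same decomposition $E=\frac14(E_1+E_2)+E_3+E_4$, the same key lemma $\left.\partial_t\vphi_t^*\omega\right|_{t=0}=\d(\vphi^*\iota_X\omega)$ (which the paper invokes as the ``homotopy lemma,'' citing \cite{spesve1}, and you prove via Cartan's formula), and the same pointwise identity $\ip{\eta,\vphi^*\iota_X\omega}=-\ip{X,J\d\vphi\sharp\eta}$ to convert pairings against the one-form $\vphi^*\iota_X\omega$ into pairings against $X$. The only difference is organizational: you rederive the first variation of $E_2$ directly from the key lemma (treating $E_2$ and $E_4$ uniformly) instead of quoting it from \cite{spesve1}, which makes the argument self-contained but not genuinely different.
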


\begin{proof} Using the notation introduced in Section \ref{bounds}
above, one has
$E(\vphi_t,C_t)=\frac14(E_1(t)+E_2(t))+E_3(t)+E_4(t)$. 
It is well known \cite[p.\ 131]{ura} that
\beq
\dot{E}_1(0)=-\ip{X,\tau(\vphi)},
\eeq
and it was shown in \cite{spesve1} that
\beq
\dot{E}_2(0)=-\ip{X,J\d\vphi\sharp\dstar(\vphi^*\omega)}.
\eeq
Turning to $E_3$, one sees that
\beq
\dot{E}_3(0)=\ip{\d Y,\d C}+\ip{Y,C}=\ip{Y,\dstar\d C+C}.
\eeq
Finally, by the homotopy lemma (see, for example, \cite{spesve1}),
\beq\label{e4}
\dot{E}_4(0)=\frac12\ip{dY,\vphi^*\omega}
+\frac12\ip{\d C,\d(\vphi^*\iota_X\omega)}=
\frac12\ip{Y,\dstar(\vphi^*\omega)}+\frac12\ip{\dstar\d C,\vphi^*\iota_X\omega}.
\eeq
Now, for any $\eta\in\Omega^1(M)$,
\beq
g(\eta,\vphi^*\iota_X\omega)=(\vphi^*\iota_X\omega)(\sharp\eta)
=\omega(X,\d\vphi\sharp\eta)
=-h(X,J\d\vphi\sharp\eta).
\eeq
Hence
\beq
\ip{\eta,\vphi^*\iota_X\omega}=\int_Mg(\eta,\vphi^*\iota_X\omega)
=-\ip{X,J\d\vphi\sharp\eta}.
\eeq
Applying this in the case $\eta=\dstar\d C$, we have, from (\ref{e4})
\beq
\dot{E}_4(0)=\frac12\ip{Y,\dstar(\vphi^*\omega)}
-\frac12\ip{X,J\d\vphi\sharp\dstar\d C}.
\eeq
The result immediately follows.
\end{proof}

The first variation formula, or field equations, follow immediately from this:

\begin{cor}\label{fe}
 $(\vphi,C)$ is a critical point of $E$ if and only if
\bea
\dstar(\d C+\frac12\vphi^*\omega)+C&=&0\label{fe1}\\
\tau(\phi)-2J\d\vphi\sharp C&=&0\label{fe2}
\eea
where, once again, $\tau(\vphi)=\tr\nabla\d\vphi$.
\end{cor}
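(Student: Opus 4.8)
The plan is to extract the Euler--Lagrange equations straight from Proposition \ref{var1} and then simplify. A pair $(\vphi,C)$ is critical precisely when $\left.\frac{d}{dt}E(\vphi_t,C_t)\right|_{t=0}=0$ for every admissible variation. First I would note that the two variation fields are genuinely independent: taking $\vphi_t=\exp_{\vphi}(tX)$ with $C_t\equiv C$ realises an arbitrary compactly supported $X\in\Gamma(\vphi^{-1}TN)$ while forcing $Y=0$, and taking $C_t=C+tY$ with $\vphi_t\equiv\vphi$ realises an arbitrary compactly supported $Y\in\Omega^1(M)$ while forcing $X=0$. Hence the two inner-product terms in the first variation decouple, and by the fundamental lemma of the calculus of variations (nondegeneracy of the $L^2$ pairing against arbitrary smooth compactly supported test objects) $(\vphi,C)$ is critical if and only if the two coefficients vanish:
\[
\tau(\vphi)+J\d\vphi\sharp\dstar(\vphi^*\omega+2\d C)=0
\qquad\text{and}\qquad
\dstar(\d C+\tfrac12\vphi^*\omega)+C=0 .
\]

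The second of these is already equation (\ref{fe1}), so it remains only to massage the first into the form (\ref{fe2}). Here I would use (\ref{fe1}) itself: since $J\d\vphi\sharp$ is $\R$-linear in its one-form argument, I expand $\dstar(\vphi^*\omega+2\d C)=\dstar\vphi^*\omega+2\dstar\d C$, and then substitute $\dstar\d C=-C-\tfrac12\dstar\vphi^*\omega$, read off from (\ref{fe1}). The two $\dstar\vphi^*\omega$ terms cancel, leaving $\dstar(\vphi^*\omega+2\d C)=-2C$, whereupon the first equation collapses to $\tau(\vphi)-2J\d\vphi\sharp C=0$, which is exactly (\ref{fe2}).

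For the converse I would observe that this substitution is reversible once (\ref{fe1}) is assumed: given (\ref{fe1}) and (\ref{fe2}), the identity $\dstar(\vphi^*\omega+2\d C)=-2C$ again holds, so the raw $X$-coefficient equation is recovered, and both coefficients vanish. Thus (\ref{fe1}) and (\ref{fe2}) together are equivalent to criticality. I do not anticipate any real obstacle: the content is a one-line algebraic reduction, and the only point requiring a word of justification is the decoupling step, namely that independence of $X$ and $Y$ together with nondegeneracy of the $L^2$ inner product forces each coefficient to vanish pointwise.
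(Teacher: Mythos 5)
Your proposal is correct and follows essentially the same route as the paper: apply the fundamental lemma of the calculus of variations to the first variation formula of Proposition \ref{var1} to obtain the two raw coefficient equations, then substitute the $C$-equation into the $\vphi$-equation to collapse $\dstar(\vphi^*\omega+2\d C)$ to $-2C$, yielding (\ref{fe2}). Your explicit remarks on the independence of the variations $X,Y$ and on the reversibility of the substitution are points the paper leaves implicit, but they do not change the argument.
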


\begin{proof} For all variations $X,Y$ in Proposition \ref{var1}, we must
have $\dot{E}(0)=0$. Hence, 
by the fundamental lemma of the calculus of variations,
\bea
\dstar(\d C+\frac12\vphi^*\omega)+C&=&0,\label{eq1}\\
-\frac14(\tau(\phi)+J\d\vphi\sharp\dstar(\vphi^*\omega+2\d C))&=&0\label{eq2}.
\eea
Substituting equation (\ref{eq1}) into (\ref{eq2}) we obtain the 
pair claimed.
\end{proof}

\begin{cor}\label{emb!} 
Let $\vphi:M\ra N$. If there exists $C\in\Omega^1(M)$ such that
$(\vphi,C)$ is a critical point of $E$, then $C$ is unique.
\end{cor}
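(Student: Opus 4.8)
The plan is to notice that, with $\vphi$ held fixed, the supercurrent enters the energy only through the terms $E_3(C)+E_4(\vphi,C)$, which together form a quadratic functional of $C$ whose homogeneous quadratic part is $\tfrac12\|\d C\|^2+\tfrac12\|C\|^2$. Since this quadratic form is positive definite, $E(\vphi,\cdot)$ is strictly convex in $C$ and can have at most one critical point. Equivalently, the only field equation constraining $C$ for given $\vphi$, namely (\ref{fe1}), rearranges to the inhomogeneous linear equation
\[
(\dstar\d+\Id)\,C=-\tfrac12\dstar(\vphi^*\omega),
\]
whose operator $\dstar\d+\Id$ and right-hand side depend only on $\vphi$; so it suffices to prove that this operator is injective.

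Concretely, suppose $(\vphi,C_1)$ and $(\vphi,C_2)$ are both critical points of $E$. Each of $C_1,C_2$ satisfies (\ref{fe1}), so their difference $D:=C_1-C_2$ solves the homogeneous equation $\dstar\d D+D=0$. Pairing with $D$ in $L^2$ and using that $\dstar$ is the adjoint of $\d$,
\[
0=\ip{\dstar\d D+D,\,D}=\ip{\d D,\d D}+\ip{D,D}=\|\d D\|^2+\|D\|^2\geq\|D\|^2,
\]
so $D=0$ and $C_1=C_2$. Note that I never need the second field equation (\ref{fe2}); it places a further constraint on $C$, but the uniqueness is already forced by (\ref{fe1}) alone, which is the reason one expects at most one embedding.

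The only delicate point, and thus the main (minor) obstacle, is the integration by parts $\ip{\dstar\d D,D}=\|\d D\|^2$, which rests on $\dstar$ being the genuine $L^2$-adjoint of $\d$. On compact $M$ this holds with no boundary term; on noncompact $M$ one invokes the standing hypothesis of Proposition \ref{var1} that the admissible variations, and hence $C$, are of compact support, so that the same identity holds with no surviving boundary contribution. With that understood the argument is completely elementary, the entire content being the strict positivity of the quadratic form $\|\d C\|^2+\|C\|^2$.
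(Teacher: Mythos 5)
Your proposal is correct and follows essentially the same route as the paper: the paper likewise takes the difference $C''=C-C'$, observes from equation (\ref{fe1}) that $\dstar\d C''+C''=0$, and pairs with $C''$ to get $0=\|\d C''\|^2+\|C''\|^2$, hence $C''=0$. Your convexity framing and the remark about compact support are harmless additions, but the core argument is identical to the paper's proof.
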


\begin{proof} Assume that $C'\in\Omega^1(M)$ also renders $(\vphi,C')$ a
critical point. Then by equation (\ref{fe1}), $C''=C-C'$ satisfies
$\dstar\d C''+C''=0$, whence $0=\ip{C'',\dstar\d C''+C''}=
\|\d C''\|^2+\|C''\|^2$. Hence $C''=0$.
\end{proof}

\begin{defn} A critical point $(\vphi,C)$ of $E$ will be called an {\em
embedding} of $\vphi$ if $\vphi$ is a critical point of $E_{FS}$. By 
Corollary \ref{emb!}, if an embedding of $\vphi$ exists, it is unique.
\end{defn}

It was shown in \cite{spesve1} that $\vphi$ is a critical point
of $E_2$ if and only if $\sharp\dstar\vphi^*\omega\in\ker\d\vphi$ everywhere.
If $\vphi$ is also harmonic, so $\tau(\vphi)=0$ 
(hence a critical point of $E_{FS}$), it is natural to seek
an embedding of $\vphi$ with $C=\mu\dstar\vphi^*\omega$, where
$\mu$ is a constant. The point is that equation (\ref{fe2}) is satisfied
automatically in this case, and we are left to check the  somewhat simpler
equation (\ref{fe1}). We will apply this idea in the next three
examples.

\begin{eg}\label{triv} 
Under what circumstances is $(\vphi,0)$ a critical point of
$E$? From Corollary \ref{fe}, 
 if and only if $\vphi$ is harmonic ($\tau(\vphi)=0$)
and coclosed ($\dstar\vphi^*\omega=0$). Such a map
is separately a critical point of
$E_1$ and $E_2$, and hence is a (very special)
critical point of $E_{FS}$.
A trivial example is the identity
map $\id:N\ra N$. This can be easily extended to projection on a
Riemannian product $\vphi=\pi:N\times K\ra N$. So $\id:S^2\ra S^2$,
$\pi:S^2\times S^1\ra S^2$ and $\pi:S^2\times\R\ra S^2$ 
embed trivially (i.e.\ with $C=0$)
in the coupled model. A less trivial family can be adapted from 
\cite{spesve2}. Let $M$ be the space of full flags in $\C^k$ and
$N$ be the Grassmannian of $l$-planes in $\C^k$. Then one has a natural
projection $\pi_l:M\ra N$ which maps each flag to its $l$-dimensional 
entry. This map is holomorphic and hence harmonic (with respect to the
usual 
$\SU(k)$ homogeneous metrics on $M$ and $N$), and is coclosed, and hence
embeds trivially in the coupled model. In particular, this 
gives a family of examples $\pi_1:M\ra\CP^{k-1}$. 
\end{eg}

\begin{eg}\label{hopfgen} 
Given the physical origin of $E(\vphi,C)$ in multicomponent
Ginzburg-Landau theory, it is natural to consider the general Hopf fibration
from $M=S^{2n+1}\subset\C^{n+1}$ to $N=\CP^n$,  given by 
$\pi:(z_1,z_2,\ldots,z_{n+1})\mapsto[z_1,z_2,\ldots,z_{n+1}]$. This map is harmonic,
and is known \cite{spesve1} to be a critical point of $E_2$,
and hence is a critical point of $E_{FS}$. We will now show that it can be
embedded in the coupled model, but in contrast to Example \ref{triv}, the
embedding has $C\neq 0$. 

Note that $\pi$ is a submersion with one dimensional fibres. At each $z
\in M
\subset\C^{n+1}$
we have an orthogonal decomposition $T_zM=\vv_z\oplus\hh_z$ where
$\vv_z=\ker\d\pi_z$ is the vertical space, tangent to the fibres, and
$\hh_z=\vv_z^\perp$ is the horizontal space, its orthogonal complement.
It is convenient to give
$N$ the Fubini-Study metric of constant holomorphic sectional curvature $4$
(in the case $n=1$, this corresponds to giving the target two-sphere radius
$\frac12$) since $\pi$ is then a Riemannian submersion, that is $\d\pi_z:\hh_z
\ra T_{\pi(z)}N$ is an isometry for all $z$. Clearly, $\vv_z$ is spanned
by the unit vector field $V(z)=iz$ (which generates the $U(1)$ action
$z\mapsto e^{it}z$ on $M\subset\C^{n+1}$). Since $\pi$ is a critical point of 
$E_2$, we know that $\d\pi\sharp\dstar(\pi^*\omega)=0$, and hence
$\dstar(\pi^*\omega)=f\flat V$ for some $f:M\ra\R$. By homogeneity of the
map, $f$ must be constant, and a short calculation based (for example)
at $z=(1,0,\ldots,0)$ shows that $f=n$. Another short calculation shows 
that $\pi^*\omega=\d\flat V$. Substituting into (\ref{fe1}) and (\ref{fe2})
one sees that $(\pi,\lambda\flat V)$ is a critical point of $E$ if (and
only if) $\lambda=\frac{n}{2n+2}$.
\end{eg}

Note that Example \ref{hopfgen} includes the standard Hopf fibration
$S^3\ra S^2$, which may be thought of as the unit charge Hopf ``soliton'' on 
$S^3$ (the special case $n=1$). We may generalize the basic Hopf fibration
in a different direction by thinking of it as the coset projection
$\SU(2)\ra \SU(2)/\S(\U(1)\times \U(1))$ onto a K\"ahler symmetric space:

\begin{eg}\label{coset}
 Let $G$ be a compact, connected, simple Lie group and $K$ be a 
compact subgroup of $G$ such that $G/K$ is an irreducible Hermitian 
symmetric space of compact type. Denote by $\g,\k$ the Lie algebras of $G,K$.
Then there is an $\Ad_K$-invariant subspace $\p$ of $\g$ such
that $[\p,\p]\subseteq\k$ and
$$
\g=\k+\p,
$$ 
this decomposition being orthogonal with respect to the Killing form.
We give $G$ the bi-invariant metric coinciding with (minus) the Killing form
at $e$, and $G/K$ the metric which makes
the homogeneous projection 
$$
\vphi:G\to G/K,\quad g\mapsto g\cdot o
$$ 
a Riemannian submersion (here $o$ denotes the identity coset in $G/K$).
Denote by $(\cdot,\cdot)$ minus the Killing form on $\g$.
The almost complex structure on $G/K$ (with respect to which the
homogeneous metric is K\"ahler) coincides with the adjoint action of an
element in the centre of $\k$. In a slight abuse of notation, we shall denote 
this element $J$. By left translation, 
we may identify vector fields on $G$ with $\g$-valued 
functions on $G$ and sections of $\vphi^{-1}T(G/K)$ with $\p$-valued
functions on $G$. The connexions on $TG$ and $\vphi^{-1}T(G/K)$ are then
\bea
\nabla_XY&=&\d Y(X)+\frac12[X,Y]\qquad(X,Y\in C^\infty(G,\g)),\label{levcit}\\
\label{pbc}\nabla_X^\vphi Y&=&\d Y(X)+[X,Y]_\p\qquad
(X\in C^\infty(G,\g),\ Y\in C^\infty(G,\p)),
\eea
respectively.

 The map $\vphi$ is well known to be harmonic, and
was shown to be a critical point of $E_2$ in \cite{spesve1}. In fact
\beq
\sharp\dstar(\vphi^*\omega)=-\frac\lambda2J
\eeq
where $\lambda$ is the eigenvalue of the Casimir operator associated with
the adjoint representation of $G$, that is,
\beq
\g\ra\g,\qquad X\mapsto-\sum_{k=1}^m[e_k,[e_k,X]]
\eeq
where $\{e_1,\ldots,e_m\}$ is an orthonormal basis for $\g$ \cite{spesve1}.
Hence, for any vector fields $X,Y$ on $G$,
\ben
\d(\dstar\vphi^*\omega)(X,Y)&=&-\frac\lambda2\{
X(J,Y)-Y(J,X)-(J,[X,Y])\}\\
&=&-\frac\lambda2\{(\nabla_XJ,Y)-(\nabla_YJ,X)\}\\
&=&-\frac\lambda2\{(\d J(X),Y)-(\d J(Y),X)+\frac12([X,J],Y)-\frac12([Y,J],X)\}\\
&=&-\frac\lambda2\{0-0-\frac12(\ad_JX,Y)+\frac12(\ad_JY,X)\}\\
&=&\frac\lambda2\vphi^*\omega(X,Y),
\een
where we have used the fact that $J$ is a {\em constant} mapping $M\ra\g$
(so $\d J=0$). Hence
\beq\label{ddom}
\d\dstar\vphi^*\omega=\frac\lambda2\vphi^*\omega.
\eeq

We now seek an embedding of $\vphi$ with $C=\mu\dstar\vphi^*\omega$.
As remarked previously, (\ref{fe2}) is satisfied automatically, and, in light
of (\ref{ddom}), we see that (\ref{fe1}) is satisfied if and only if
$\mu=-(\lambda+2)^{-1}$ (which always exists since the Casimir is a positive
operator). In particular, let $G=\SU(n+1)$ and $K=\S(\U(1)\times \U(n))$. Then
this gives an embedding of the projection $\vphi:\SU(n+1)\ra G/K\equiv
\CP^n$. Note that the vertical space of the fibration $\vphi$
has dimension $n^2$ in this case, rather than $1$, as in Example
\ref{hopfgen}. From the point of view of satisfying
equation (\ref{fe2}), it looks like choosing $C$ parallel to 
$\dstar\vphi^*\omega$ is needlessly restrictive ($\sharp C$ need only be 
restricted to the $n^2$-dimensional vertical space). However, by
Corollary \ref{emb!}, we know that no alternative embedding of $\vphi$
exists, despite the apparent extra freedom.
\end{eg}

\section{Second variation: the Hessian}\label{secondvariation}

Let $(\vphi,C)$ be a critical point of $E$ and $(\vphi_{s,t},C_{s,t})$
be a two-parameter variation of $(\vphi,C)=(\vphi_{0,0},C_{0,0})$. The
{\em Hessian} of $E$ at $(\vphi,C)$ is
$$
H_{(\vphi,C)}((X,Y),(\hat{X},\hat{Y}))=
\left.\frac{\cd^2E(\vphi_{s,t},C_{s,t})}{\cd s\cd t}\right|_{s=t=0}
$$
where
\ben
X=\cd_s\vphi_{s,t}|_{s=t=0},\quad
\hat{X}=\cd_t\vphi_{s,t}|_{s=t=0}&\in&\Gamma(\vphi^{-1}TN),\\
Y=\cd_sC_{s,t}|_{s=t=0},\quad
\hat{Y}=\cd_tC_{s,t}|_{s=t=0}&\in&\Omega^1(M).
\een
Let $\EE$ denote the vector bundle $\vphi^{-1}TN\oplus T^*M$ over $M$. Then
$H_{(\vphi,C)}$ is a symmetric bilinear form on $\Gamma(\EE)$. The critical 
point $(\vphi,C)$ is {\em stable} if the associated
quadratic form is non-negative, that is,
$$
H_{(\vphi,C)}((X,Y),(X,Y))\geq 0\qquad\mbox{for all $(X,Y)
\in\Gamma(\EE)$.}
$$
Otherwise, $(\vphi,C)$ is unstable. The {\em index} of an unstable critical
point is the dimension of the largest subspace of $\Gamma(\EE)$
on which the quadratic form
is negative. We will see that there is a self-adjoint linear differential
operator $\hh_{(\vphi,C)}:\Gamma(\EE)\ra\Gamma(\EE)$, which we will call
the {\em Hessian operator}, such that
$$
H_{(\vphi,C)}((X,Y),(\hat{X},\hat{Y}))=\ip{(X,Y),\hh_{(\vphi,C)}(\hat{X},\hat{Y})}.
$$
To show that a critical point in unstable, it suffices to exhibit a
negative eigenvalue of its Hessian operator.

The aim of this section is to compute $\hh_{(\vphi,C)}$.
We start by recalling the expressions for the analogous operators for
critical points of the functionals $E_1$ and $E_2$, denoted
$\jj$ and $\L$ respectively. The first of these is conventionally
called the {\em Jacobi operator}. Since $E_2$ can be naturally viewed
as a symplectic analogue of $E_1$, we shall call $\L$ the
{\em symplectic Jacobi operator}.

\begin{prop}[Smith, \cite{smi}] \label{jjprop}
Let $\vphi:M\ra N$ be a critical point of 
$E_1=\frac12\|\d\vphi\|^2$. Then the Hessian of $E_1$ at $\vphi$ is
$\ip{X,\jj_\vphi Y}$ where the Jacobi operator is
$$
\jj_\vphi Y=\ol{\Delta}_\vphi Y-\rr_\vphi Y.
$$
The terms of 
$\jj_\vphi$ are defined as follows. Let $\{e_1,\ldots, e_m\}$ be a local
orthonormal frame on $M$ and $R$ be the curvature of $\nabla$ on $N$. 
Then the rough Laplacian is
$$
\ol{\Delta}_\vphi Y=\sum_{k=1}^m(\nabla_{e_k}\nabla_{e_k}Y-\nabla_{\nabla_{e_k}e_k}Y)
$$
and 
$$
\rr_\vphi Y=\sum_{k=1}^mR(Y,\d\vphi E_k)\d\vphi E_k.
$$
\end{prop}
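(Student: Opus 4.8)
The plan is to read off $\jj_\vphi$ from the second variation of $E_1$, following the classical harmonic-map computation (cf.\ \cite{smi}). I would take a smooth two-parameter variation $\vphi_{s,t}$ of $\vphi=\vphi_{0,0}$ with variation fields $X=\cd_s\vphi_{s,t}|_{s=t=0}$ and $Y=\cd_t\vphi_{s,t}|_{s=t=0}$ in $\Gamma(\vphi^{-1}TN)$ (of compact support if $M$ is noncompact), and treat all quantities as sections of the bundle pulled back by $F(x,s,t)=\vphi_{s,t}(x)$ over $M\times I_s\times I_t$. Fixing a local orthonormal frame $\{e_k\}$ on $M$ and extending it to commute with $\cd_s$ and $\cd_t$, I would first record the first variation
\[
\cd_t E_1=\int_M\sum_k h(\nabla_{e_k}\d F(\cd_t),\d F(e_k)),
\]
obtained by differentiating $E_1=\frac12\int_M\sum_k h(\d F(e_k),\d F(e_k))$ and applying the symmetry lemma $\nabla_{\cd_t}\d F(e_k)=\nabla_{e_k}\d F(\cd_t)$, valid since $\nabla$ is torsion-free and $[\cd_t,e_k]=0$.

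Next I would differentiate once more in $s$ and set $s=t=0$. Two standard identities do the work. The symmetry lemma again converts $\nabla_{\cd_s}\d F(e_k)$ into $\nabla_{e_k}\d F(\cd_s)$, while commuting $\nabla_{\cd_s}$ past $\nabla_{e_k}$ on the section $\d F(\cd_t)$ introduces the curvature of the pullback connection,
\[
\nabla_{\cd_s}\nabla_{e_k}\d F(\cd_t)=\nabla_{e_k}\nabla_{\cd_s}\d F(\cd_t)+R(\d F(\cd_s),\d F(e_k))\d F(\cd_t),
\]
again using $[\cd_s,e_k]=0$. Setting $s=t=0$ leaves three contributions: one built from $\nabla_{\cd_s}\d F(\cd_t)|_0$, a gradient term $\int_M\sum_k h(\nabla_{e_k}X,\nabla_{e_k}Y)$, and a curvature term $\int_M\sum_k h(R(X,\d\vphi\,e_k)Y,\d\vphi\,e_k)$.

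The key simplification is that the first contribution, after one integration by parts, pairs $\nabla_{\cd_s}\d F(\cd_t)|_0$ against the tension field $\tau(\vphi)=\tr\nabla\d\vphi$; since $\vphi$ is harmonic this vanishes, and this is precisely where the hypothesis that $\vphi$ is a critical point of $E_1$ enters. For the gradient term I would integrate by parts once more (legitimate by compact support, or closedness of $M$) to rewrite $\int_M\sum_k h(\nabla_{e_k}X,\nabla_{e_k}Y)$ as $\ip{X,\ol\Delta_\vphi Y}$; for the curvature term I would use the symmetries $h(R(A,B)C,D)=h(R(C,D)A,B)=-h(R(B,A)C,D)$ to recognise $\int_M\sum_k h(R(X,\d\vphi\,e_k)Y,\d\vphi\,e_k)$ as $-\ip{X,\rr_\vphi Y}$, with $\rr_\vphi Y=\sum_k R(Y,\d\vphi\,e_k)\d\vphi\,e_k$. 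Assembling the pieces gives $H_\vphi(X,Y)=\ip{X,(\ol\Delta_\vphi-\rr_\vphi)Y}$, as claimed.

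I expect the obstacle to be bookkeeping rather than conceptual: justifying the two commutation identities and, above all, keeping sign conventions consistent. The curvature symmetries deliver the minus sign on $\rr_\vphi$ automatically once a convention for $R$ is fixed, whereas the sign in front of $\ol\Delta_\vphi$ hinges on reading the rough Laplacian as the positive operator $\nabla^*\nabla$, of which $\int_M\sum_k h(\nabla_{e_k}X,\nabla_{e_k}Y)$ is manifestly the quadratic form. Checking that the two integrations by parts and the tensor symmetries together yield exactly $\ol\Delta_\vphi-\rr_\vphi$ is the one place where genuine care is needed.
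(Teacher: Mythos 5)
Your computation is correct, and it is essentially the proof of the source being cited: the paper itself gives no argument for this proposition (it is quoted from Smith \cite{smi}), and the classical two-parameter variation you describe --- symmetry lemma, commuting $\nabla_{\cd_s}$ past $\nabla_{e_k}$ at the cost of a pullback-curvature term, and one integration by parts that pairs $\nabla_{\cd_s}\d F(\cd_t)|_0$ against $\tau(\vphi)$, which vanishes by harmonicity --- is exactly the standard route. One thing you mention in passing deserves to be promoted to a substantive point, because it is not mere bookkeeping: your gradient term $\int_M\sum_k h(\nabla_{e_k}X,\nabla_{e_k}Y)$ integrates by parts to $\ip{X,\nabla^*\nabla Y}$ where $\nabla^*\nabla Y=-\sum_k(\nabla_{e_k}\nabla_{e_k}Y-\nabla_{\nabla_{e_k}e_k}Y)$, whereas the proposition as printed defines $\ol{\Delta}_\vphi$ \emph{without} this minus sign; taken literally, the stated formula $\jj_\vphi=\ol{\Delta}_\vphi-\rr_\vphi$ is therefore off by a sign in the Laplacian term, and no choice of curvature convention repairs this (flipping $R$ also flips $\rr_\vphi$ and merely negates the whole operator, which would reverse every stability conclusion). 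Your ``positive'' reading $\ol{\Delta}_\vphi=\nabla^*\nabla$ is the correct one, and it is the reading the rest of the paper relies on: the stability of $(\Id,0)$ uses $\ip{X,\jj_\vphi X}\geq 0$ for holomorphic maps between K\"ahler manifolds, and the explicit matrix (\ref{Jmat}) for the Hopf map has second-order part $-(\thet_1^2+\thet_2^2+\thet_3^2)$, the \emph{positive} Laplacian on $\SU(2)$. So your proof is correct, follows the same approach as the cited source, and in effect also locates a sign typo in the printed definition of the rough Laplacian.
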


\begin{prop}[Speight-Svensson, \cite{spesve1}]\label{lprop}
Let $\vphi:M\ra N$ be a critical point of 
$E_2=\frac12\|\vphi^*\omega\|^2$. Then the Hessian of $E_2$ at $\vphi$ is
$\ip{X,\L_\vphi Y}$ where the symplectic Jacobi operator is
$$
\L_\vphi Y=-J\bigg(\nabla^\vphi_{Z_\vphi}Y+
\d\vphi(\musicd\dstar\d\vphi^*\iota_Y\omega)\bigg)\ \text{ and }\  
Z_\vphi=\musicd\dstar\vphi^*\omega.
$$
\end{prop}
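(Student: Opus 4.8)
The plan is to obtain $\L_\vphi$ by differentiating the \emph{gradient} of $E_2$ along a variation, building directly on the first variation formula already recorded above. From $\dot E_2(0)=-\ip{X,J\d\vphi\sharp\dstar(\vphi^*\omega)}$ one reads off the $L^2$ gradient $\grad E_2(\vphi)=-J\d\vphi\sharp\dstar(\vphi^*\omega)\in\Gamma(\vphi^{-1}TN)$, so that $\vphi$ is critical precisely when $\d\vphi\sharp\dstar(\vphi^*\omega)=0$, i.e. when $Z_\vphi:=\sharp\dstar\vphi^*\omega$ lies in $\ker\d\vphi$ everywhere (recovering the criterion of \cite{spesve1}). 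At a critical point the Hessian bilinear form of $E_2$ equals $(X,Y)\mapsto\ip{X,\nabla^\vphi_t\,\grad E_2|_{t=0}}$, where $\vphi_t$ is a variation with $\cd_t\vphi|_0=Y$: the covariant derivative of the gradient is the genuine Hessian operator here \emph{because} $\grad E_2$ vanishes at $\vphi$, so that the extra term produced by the Leibniz rule, which carries a factor of $\grad E_2$, drops out. This is exactly where criticality is used. So I would simply set $\L_\vphi Y:=\nabla^\vphi_t\,\grad E_2|_{t=0}$ and compute it.

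Since $N$ is K\"ahler, $\nabla J=0$, so $J$ passes through the covariant derivative and it suffices to compute $\nabla^\vphi_t\big(\d\vphi_t(W_t)\big)|_0$, where $W_t:=\sharp\dstar(\vphi_t^*\omega)$ and $W_0=Z_\vphi$. This object has two sources of $t$-dependence, and separating them cleanly is the crux of the argument. Extending a frame on $M$ to be $t$-independent, the first source is the variation of the differential: by symmetry of the pulled-back second fundamental form (equivalently $\nabla^\vphi_t\,\d\vphi(W)=\nabla^\vphi_W(\cd_t\vphi)=\nabla^\vphi_W Y$ for $t$-independent $W$), this contributes $\nabla^\vphi_{W_0}Y=\nabla^\vphi_{Z_\vphi}Y$. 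The second source is the variation of the vector field $W_t$ itself, contributing $\d\vphi(\cd_t W_t|_0)$; here $\cd_t W_t|_0=\sharp\dstar(\cd_t\vphi_t^*\omega|_0)$ because $\sharp$ and $\dstar$ are fixed operators, and by the homotopy lemma (using $\d\omega=0$) one has $\cd_t(\vphi_t^*\omega)|_0=\d(\vphi^*\iota_Y\omega)$, giving $\d\vphi(\sharp\dstar\d(\vphi^*\iota_Y\omega))$. Assembling the two pieces and restoring $-J$ yields exactly $\L_\vphi Y=-J\big(\nabla^\vphi_{Z_\vphi}Y+\d\vphi(\sharp\dstar\d(\vphi^*\iota_Y\omega))\big)$.

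The hard part will be the bookkeeping in the previous paragraph: correctly attributing the two contributions to $\nabla^\vphi_t(\d\vphi_t(W_t))$ and invoking the right naturality identities ($\nabla J=0$, symmetry of $\nabla\d\vphi$, commutation of $\dstar$ with $\cd_t$, and the homotopy lemma). As a consistency check -- and an alternative route that makes the role of criticality even more transparent -- I would instead differentiate the first variation in the form $\cd_t E_2=\ip{\vphi_t^*\iota_Y\omega,\dstar(\vphi_t^*\omega)}$ with respect to a second parameter $s$ (with $\cd_s\vphi|_0=X$). Differentiating the first slot produces a term $\int_M\omega(\nabla_s Y,\d\vphi\,Z_\vphi)$ that vanishes \emph{precisely} because $Z_\vphi\in\ker\d\vphi$ at a critical point, leaving $\ip{Y,-J\nabla^\vphi_{Z_\vphi}X}$; differentiating the second slot and applying the identity $\ip{\eta,\vphi^*\iota_X\omega}=-\ip{X,J\d\vphi\sharp\eta}$ (already established in the computation of $\dot E_4$) produces $\ip{Y,-J\d\vphi\sharp\dstar\d(\vphi^*\iota_X\omega)}$. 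Symmetry of the Hessian then identifies the sum with $\ip{X,\L_\vphi Y}$, confirming the formula.
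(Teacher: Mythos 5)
Your proposal is correct: the main computation (reading off the $L^2$ gradient $-J\d\vphi\sharp\dstar(\vphi^*\omega)$ from the first variation, noting that criticality kills the extra Leibniz term so the Hessian is $\ip{X,\nabla^\vphi_t\grad E_2|_{t=0}}$, then splitting $\nabla^\vphi_t\big(\d\vphi_t(W_t)\big)$ via the total map into $\nabla^\vphi_{Z_\vphi}Y$ plus $\d\vphi\big(\sharp\dstar\d(\vphi^*\iota_Y\omega)\big)$ using torsion-freeness of the product connexion and the homotopy lemma) goes through, and your second, two-parameter route is a valid cross-check that uses criticality ($\d\vphi\, Z_\vphi=0$) in exactly the right place. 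Note that the paper states this proposition without proof, citing \cite{spesve1}; your argument is essentially the same technique the paper itself displays in proving Proposition \ref{hessprop}, where the analogous term $\cd_t(\d\vphi_{t}\sharp\dstar\d C_{t})$ is decomposed in precisely this way.
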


\begin{prop}\label{hessprop}
 Let $(\vphi,C)$ be a critical point of $E$. Then the Hessian 
operator for  
$E$ at $(\vphi,C)$ is 
$$
\hh_{(\vphi,C)}\left[\begin{array}{c}X\\Y\end{array}\right]
=\left[\begin{array}{c}
\frac14(\jj_\vphi X+\L_\vphi X)-\frac12J\nabla^\vphi_{\sharp\dstar\d C}X
-\frac12J\d\vphi\sharp\dstar\d Y\\
\dstar\d(Y+\frac12\vphi^*\iota_X\omega)+Y\end{array}\right].
$$
Recall that $X$ is a section of $\vphi^{-1}TN$ and $Y$ is a one-form on $M$.
\end{prop}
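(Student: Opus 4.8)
The plan is to obtain $\hh_{(\vphi,C)}$ as the covariant linearisation of the gradient vector field of $E$, rather than by computing the full second variation of each summand separately. Write a two-parameter variation $(\vphi_{s,t},C_{s,t})$ with $s$-velocities $(X,Y)$ and $t$-velocities $(\hat X,\hat Y)$. Since mixed partials commute, $H_{(\vphi,C)}=\partial_t\partial_s E$, and by Proposition \ref{var1} the inner $s$-derivative is $\partial_s E=\ip{X,\grad_\vphi E}+\ip{Y,\grad_C E}$ with $\grad_\vphi E=-\frac14(\tau(\vphi)+J\d\vphi\sharp\dstar(\vphi^*\omega+2\d C))$ and $\grad_C E=\dstar(\d C+\frac12\vphi^*\omega)+C$. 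Differentiating once more in $t$ produces the linearised-gradient terms together with two ``acceleration'' terms $\ip{\frac{D}{dt}X,\grad_\vphi E}$ and $\ip{\partial_t Y,\grad_C E}$; these vanish identically because $(\vphi,C)$ is a critical point, so $\grad E=0$. Hence $H_{(\vphi,C)}((X,Y),(\hat X,\hat Y))=\ip{(X,Y),\frac{D}{dt}\grad E\big|_{0}}$, and it remains only to linearise the two components of the gradient in the direction $(\hat X,\hat Y)$.

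The $C$-component lives in the linear space $\Omega^1(M)$, so its linearisation is an ordinary $t$-derivative. The only non-constant ingredient is $\vphi_t^*\omega$, and since $\omega$ is closed the homotopy formula (already used in Proposition \ref{var1}) gives $\frac{d}{dt}\vphi_t^*\omega\big|_{0}=\d(\vphi^*\iota_{\hat X}\omega)$; together with $\partial_t C=\hat Y$ this yields $\dstar\d(\hat Y+\frac12\vphi^*\iota_{\hat X}\omega)+\hat Y$, the claimed second row. For the $\vphi$-component I would split $\grad_\vphi E$ into its tension part $-\frac14\tau(\vphi)=\frac14\grad E_1$, its symplectic part $-\frac14 J\d\vphi\sharp\dstar\vphi^*\omega=\frac14\grad E_2$, and the genuinely new mixed part $-\frac12 J\d\vphi\sharp\dstar\d C$. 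The covariant linearisations of the first two are $\frac14\jj_\vphi\hat X$ and $\frac14\L_\vphi\hat X$ by Propositions \ref{jjprop} and \ref{lprop}.

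Here lies the one point that needs care. At a critical point of $E$ the map $\vphi$ is in general \emph{neither} harmonic \emph{nor} a critical point of $E_2$ --- indeed (\ref{fe2}) gives $\tau(\vphi)=2J\d\vphi\sharp C\neq0$ --- so I must use that $\jj_\vphi$ and $\L_\vphi$ are the linearisations of the tension field and of $\grad E_2=-J\d\vphi\sharp\dstar\vphi^*\omega$ at an \emph{arbitrary} map, not merely at critical points of $E_1$ or $E_2$. This is legitimate: both operators are obtained precisely by differentiating those gradient fields, with no use of any critical-point hypothesis. The harmonicity (resp. $E_2$-criticality) assumption in Propositions \ref{jjprop}, \ref{lprop} enters only in discarding their own acceleration terms, exactly as in the first paragraph; the formula for $\L_\vphi$ in particular is reproduced verbatim by this linearisation, with the frozen argument giving $\nabla^\vphi_{Z_\vphi}\hat X$ and the varying argument giving $\d\vphi(\sharp\dstar\d\vphi^*\iota_{\hat X}\omega)$.

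The main computation --- and the only real obstacle --- is linearising the mixed term $-\frac12 J\d\vphi\sharp\dstar\d C$. Since $J$ is parallel it commutes with $\frac{D}{dt}$, and I apply the product rule to $\d\vphi_t(\sharp\dstar\d C_t)$, whose argument $W_t=\sharp\dstar\d C_t$ is a $t$-dependent vector field on $M$. The contribution with $W$ frozen gives, by torsion-freeness of the pullback connexion ($[\partial_t,W]=0$, hence $\nabla^\vphi_{\partial_t}\d\vphi(W)=\nabla^\vphi_W\hat X$), the term $\nabla^\vphi_{\sharp\dstar\d C}\hat X$; the contribution where $W$ varies gives $\d\vphi(\frac{d}{dt}W_t)=\d\vphi\sharp\dstar\d\hat Y$, since $\sharp,\dstar,\d$ are $t$-independent and $\partial_t C=\hat Y$. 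Multiplying by $-\frac12 J$ reproduces the two remaining entries of the first row. Assembling the three contributions and renaming the dummy variation fields gives the stated $\hh_{(\vphi,C)}$. Self-adjointness --- guaranteed a priori since $H$ is a symmetric bilinear form --- furnishes a useful check: the off-diagonal blocks $-\frac12 J\d\vphi\sharp\dstar\d(\cdot)$ and $\frac12\dstar\d\vphi^*\iota_{(\cdot)}\omega$ are mutually adjoint via the identity $\ip{\eta,\vphi^*\iota_X\omega}=-\ip{X,J\d\vphi\sharp\eta}$ established in Proposition \ref{var1}.
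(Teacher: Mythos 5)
Your proposal is correct and follows essentially the same route as the paper: differentiate the first variation formula, use criticality of $(\vphi,C)$ to discard the acceleration terms, identify the $E_1$ and $E_2$ pieces with $\jj_\vphi$ and $\L_\vphi$, apply the homotopy lemma to the $\vphi^*\omega$ terms, and linearise the mixed term $-\frac12 J\d\vphi\sharp\dstar\d C$ by the product rule together with torsion-freeness of the pullback connexion. The one place you go beyond the paper is in explicitly justifying that $\jj_\vphi$ and $\L_\vphi$ are the linearisations of the corresponding gradients at an \emph{arbitrary} map (the paper cites Propositions \ref{jjprop} and \ref{lprop}, which are stated only at critical points of $E_1$ and $E_2$, without comment), which is a worthwhile clarification of a step the paper leaves implicit.
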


\begin{proof} Let $(\vphi_{s,t},C_{s,t})$ be a smooth variation of critical point
$(\vphi,C)=(\vphi_{0,0},C_{0,0})$, and $(X,Y),(\hat{X},\hat{Y})\in\Gamma(\EE)$
be its infinitesimal generators. Then, since $(\vphi,C)$ is critical,
\ben
\left.\frac{\cd^2 E(\vphi_{s,t},C_{s,t})}{\cd s\, \cd t}\right|_{s=t=0}&=&
-\frac14\ip{X,\cd_t\tau(\vphi_{0,t})|_{t=0}}
-\frac14\ip{X,\cd_t(J\d\vphi_{0,t}\sharp\dstar(\vphi^*\omega))|_{t=0}}\\
&&
+\ip{Y,\cd_t(\dstar\d C_{0,t}+C_{0,t})|_{t=0})}
+\frac12\ip{Y,\cd_t\dstar(\vphi_{0,t}^*\omega)|_{t=0}}\\
&&
-\frac12\ip{X,\cd_t(J\d\vphi_{0,t}\sharp\dstar\d C_{0,t})|_{t=0}}\\
&=&\frac14\ip{X,(\jj_\vphi+\L_\vphi)\hat{X}}
+\ip{Y,\dstar\d\hat{Y}+\hat{Y}}
+\frac12\ip{X,\dstar\d(\vphi^*\iota_{\hat{X}}\omega)}\\
&&
-\frac12\ip{X,J\cd_t(\d\vphi_{0,t}\sharp\dstar\d C_{0,t})|_{t=0}}
\een
by Propositions \ref{jjprop} and \ref{lprop}, and the Homotopy Lemma.
Let $(\vphi_t,C_t)=(\vphi_{0,t},C_{0,t})$, and $F:(-\eps,\eps)\times M\ra N$
be the total map $F(t,x)=\vphi_t(x)$. Then
\ben
\cd_t(\d\vphi_{t}\sharp\dstar\d C_{t})|_{t=0}&=&
\left(\nabla^F_{\cd/\cd t}\d F\right)_{t=0}\sharp\dstar\d C+
\d\vphi\sharp\dstar\d(\cd_tC_t)|_{t=0}\\
&=&\left(\nabla^F_{\sharp\dstar\d C}\d F\right)_{t=0}\cd/\cd t+
\d\vphi\sharp\dstar\d\hat{Y}\\
&=&\nabla^\vphi_{\sharp\dstar\d C}\hat{X}
-\d\vphi\left(\nabla^{(-\eps,\eps)\times M}_{\sharp\dstar\d C_t}\frac{\cd\: }{\cd t}
\right)_{t=0}+\d\vphi\sharp\dstar\d\hat{Y}\\
&=&\nabla^\vphi_{\sharp\dstar\d C}\hat{X}+\d\vphi\sharp\dstar\d\hat{Y}
\een
since the Levi-Civita connexion on $N$ is torsionless, and $(-\eps,\eps)\times
M$ has the product metric. The result immediately follows.
\end{proof}

\begin{rem}\label{selfadjoint}
 A nontrivial check on this formula is that $\hh$ should
be self-adjoint with respect to the $L^2$ inner product on $\Gamma(\EE)$.
This is clear, provided that the operators
\ben
&&\AAA:\Omega^1(M)\ra \Gamma(\vphi^{-1}TN)\qquad 
\AAA:Y\mapsto -J\d\vphi\sharp\dstar\d Y\\
&&\BB:\Gamma(\vphi^{-1}TN)\ra\Omega^1(M)\qquad
\BB:X\mapsto \dstar\d(\vphi^*\iota_X\omega)
\een
are an adjoint pair, that is, $\BB^\dagger=\AAA$. Let us check this:
\ben
\ip{Y,\BB X}&=&\ip{\dstar\d Y,\vphi^*\iota_X\omega}
=\int_M\vphi^*\iota_X\omega(\sharp\dstar\d Y)
=\int_M\omega(X,\d\vphi\sharp\dstar\d Y)\\
&=&\ip{JX,\d\vphi\sharp\dstar\d Y}=\ip{X,\AAA Y}
\een
as required.
\end{rem}

The formula for the total Hessian operator is, admittedly, rather complicated.
However, it is not so complicated as to be unusable, as we shall now see.

\begin{cor} $(\Id,0)$ is a stable critical point of $E$, where $\Id:N\ra N$ 
is the identity map on any K\"ahler manifold.
\end{cor}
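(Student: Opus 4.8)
The plan is to specialize the Hessian operator of Proposition \ref{hessprop} to $(\vphi,C)=(\Id,0)$ and show that the associated quadratic form $Q(X,Y)=H_{(\Id,0)}((X,Y),(X,Y))$ is non-negative. Recall from Example \ref{triv} that $(\Id,0)$ really is a critical point of $E$, so the statement makes sense. Several simplifications occur here: since $C=0$ we have $\dstar\d C=0$, which kills the term $-\frac12J\nabla^\vphi_{\sharp\dstar\d C}X$; since $\vphi=\Id$ we identify $\vphi^{-1}TN$ with $TN$, take $\d\vphi$ to be the identity, and write $\vphi^*\iota_X\omega=\iota_X\omega$; and since the K\"ahler form is coclosed ($\dstar\omega=0$), the vector field $Z_\Id=\sharp\dstar\vphi^*\omega$ of Proposition \ref{lprop} vanishes, so that $\L_\Id X=-J\sharp\dstar\d\iota_X\omega$.

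First I would assemble the quadratic form. Writing $\AAA$ and $\BB$ for the adjoint pair of Remark \ref{selfadjoint}, so that $\AAA Y=-J\sharp\dstar\d Y$, $\BB X=\dstar\d\iota_X\omega$ and $\ip{X,\AAA Y}=\ip{Y,\BB X}$, pairing $\hh_{(\Id,0)}(X,Y)$ with $(X,Y)$ gives $Q(X,Y)=\frac14\ip{X,(\jj_\Id+\L_\Id)X}+\ip{X,\AAA Y}+\|\d Y\|^2+\|Y\|^2$, the two off-diagonal contributions combining into the single term $\ip{X,\AAA Y}$ precisely because $\AAA$ and $\BB$ are adjoint.

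The key computational step is to evaluate the $\L_\Id$ and cross terms using the pointwise identity $\ip{\eta,\iota_X\omega}=-\ip{X,J\sharp\eta}$ (the $\vphi=\Id$ case of the relation derived in the proof of Proposition \ref{var1}). Taking $\eta=\dstar\d\iota_X\omega$ yields $\ip{X,\L_\Id X}=\|\d\iota_X\omega\|^2$, while taking $\eta=\dstar\d Y$ yields $\ip{X,\AAA Y}=\ip{\d Y,\d\iota_X\omega}$. Substituting these, the entire supercurrent/symplectic sector assembles into a perfect square, and one is left with
$$
Q(X,Y)=\frac14\ip{X,\jj_\Id X}+\Big\|\d Y+\tfrac12\d\iota_X\omega\Big\|^2+\|Y\|^2,
$$
so that $Q(X,Y)\geq\frac14\ip{X,\jj_\Id X}$.

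It then remains to show $\ip{X,\jj_\Id X}\geq 0$, i.e.\ that the identity is stable as a harmonic map. This is where the K\"ahler hypothesis does the real work: $\Id:N\to N$ is holomorphic, and holomorphic maps between K\"ahler manifolds minimize the Dirichlet energy $E_1$ in their (compactly supported) homotopy class, since $\frac12(\|\partial\vphi\|^2-\|\bar\partial\vphi\|^2)$ is invariant under compactly supported homotopy while $E_1=\frac12(\|\partial\vphi\|^2+\|\bar\partial\vphi\|^2)$. Hence the second variation of $E_1$ at $\Id$, which by Proposition \ref{jjprop} is exactly $\ip{X,\jj_\Id X}$, is non-negative, and $Q\geq 0$ follows. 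I expect the main obstacle to be not the (routine) completion of the square but the careful bookkeeping of the musical isomorphisms and of $J$ in the two inner-product identities, together with invoking the energy-minimality of holomorphic maps in a form valid for a possibly non-compact $N$ with compactly supported variations.
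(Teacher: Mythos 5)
Your proof is correct and follows essentially the same route as the paper: both specialize the Hessian of Proposition \ref{hessprop} at $(\Id,0)$, use the coclosedness of $\omega$ and the identity $\iota_X\omega=\flat JX$ to get $\ip{X,\L_{\Id}X}=\|\d\iota_X\omega\|^2$ and $\ip{X,\AAA Y}=\ip{\d Y,\d\iota_X\omega}$, and then complete the square so that stability reduces to $\ip{X,\jj_{\Id}X}\geq 0$. The only difference is that where the paper simply cites Urakawa for the non-negativity of the Jacobi form at a holomorphic map between K\"ahler manifolds, you sketch the underlying Lichnerowicz energy-decomposition argument (with the appropriate caveat about compactly supported variations on non-compact $N$), which is a harmless, self-contained substitution.
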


\begin{proof} Certainly $(\Id,0)$ is a critical point
(Example \ref{triv}). 
It is known \cite[p.\ 172]{ura}
that 
$$
\ip{X,\jj_{\vphi}X}\geq 0
$$ 
for any holomorphic 
map $\vphi$ between K\"ahler manifolds, and hence, in particular, for
$\vphi=\Id$. In the case of $\vphi=\Id$, $\vphi^*\omega=\omega$, which is 
coclosed, and one has a canonical identification
of $\vphi^{-1}TN$ with $TM=TN$, so that $\vphi^*(\iota_X\omega)=\flat JX$,
and one finds that $\ip{X,\L_{\Id}X}=\|\d\flat JX\|^2$
\cite{spesve1}. Hence
\ben
\ip{(X,Y),\hh(X,Y)}&=&\frac14\ip{X,\jj_{\Id}X}+\frac14\|\d\flat JX\|^2
+\|\d Y\|^2+\|Y\|^2+\ip{\d Y,\d\flat JX}\\
&\geq&\|\d\flat J\frac{X}{2}\|^2+2\ip{\d Y,\d\flat J\frac{X}{2}}+\|\d Y\|^2
\geq 0
\een
\end{proof}

As a less trivial application of the second variation formula, we can
use it to show that the embedded Hopf fibration $S^3\ra S^2$ is unstable.
This is the subject of the next section.

\section{Stability of the embedded Hopf map $S^3\ra S^2$}\label{hopf}

In this section we will analyze in detail the Hessian operator for
the embedded Hopf fibration $S^3_R\ra S^2$, where $S^3_R$ is the 3-sphere of
radius $R$ and $S^2$ is the unit 2-sphere. It is convenient to identify the
domain with $G=\SU(2)$, and the codomain with $G/K$, where
$K=\{\diag(\lambda,\ol{\lambda})\: :\:\lambda\in \U(1)\}$, so that 
$\vphi:G\ra G/K$ is the coset projection, and use the machinery outlined in
Example \ref{coset}.
Let $\thet_1,\thet_2,\thet_3$ be the usual basis of left-invariant vector
fields on $G$ (coinciding at $e$ with $\frac{i}{2}\tau_a$, $a=1,2,3$, where
$\tau_a$ are the Pauli spin matrices). Note that $\k=\{\thet_3\}_\R$ and
$\p=\{\thet_1,\thet_2\}_\R$, where $\{\cdot\}_\R$ denotes linear span.
The almost complex structure on $G/K$ coincides with the adjoint action
of $J=\thet_3$ on $\p$, and the pullback of the K\"ahler form $\omega$
to $G$ is $\vphi^*\omega=-\sigma_1\wedge\sigma_2$, where
$\{\sigma_a\}$ is the coframe dual to $\{\thet_a\}$. 

Unlike Example \ref{coset}, we will not give $G$ the metric coinciding at
$e$ with minus the Killing form, nor $G/K$ the metric which renders
$\vphi$ a Riemannian submersion. Rather, $G$ and $G/K$ are given the round
metrics of radius $R$ and $1$, respectively. Making use of the canonical
identifications
$TG\equiv G\times\g$ and $\vphi^{-1}(TG/K)\equiv G\times\p$, 
this amounts to declaring $\thet_a\in TG$, $a=1,2,3$,
 to be orthogonal vector fields
of length $\frac{R}{2}$, and $\thet_1,\thet_2$ to be orthonormal
sections of $T(G/K)$. It follows that $\dstar\vphi^*\omega=-\frac{4}{R^2}
\sigma_3$, and that the pair $(\vphi,C)$ satisfies (\ref{fe1}),(\ref{fe2})
if and only if
\beq
C=\frac{2}{4+R^2}\sigma_3.
\eeq
The pair $(\vphi,C)$ is the analogue for the supercurrent coupled
Faddeev-Skyrme model on $S^3_R$ of the unit charge Hopf soliton on
$\R^3$. We note in passing that the magnetic field of this 
coupled ``hopfion'' is tangent to the fibres of $\vphi$,
\beq
B=\sharp * (\d C +\frac12\vphi^*\omega)=-\frac{4}{R(4+R^2)}\thet_3.
\eeq

 Making use of the
identifcations $\vphi^{-1}T(G/K)\equiv G\times\p$ and
$T^*G\equiv G\times\g^*$, we can identify any smooth section of $\EE$
with a quintuple of smooth functions $f_i:G\ra\R$, $i=1,2,\ldots, 5$, by
\beq
(X,Y)=(f_1\thet_1+f_2\thet_2,f_3\sigma_1+f_4\sigma_2+f_5\sigma_3).
\eeq
If we similarly identify the section $\hh(X,Y)$ with a mapping
$G\ra\R^5$, 
 the Hessian operator for $E$ at $(\vphi,C)$
is represented by a $5\times 5$ matrix of differential operators
acting on real-valued functions on $G$. Similarly, the Jacobi
and symplectic Jacobi operators 
$\jj,\L:\Gamma(\vphi^{-1}TN)\ra\Gamma(\vphi^{-1}TN)$ are represented by
$2\times 2$ matrices of differential operators. We shall denote the matrix 
representing $\hh$  by
$(\hh)$. Similarly, the $2\times 2$ matrix of differential operators
representing $\jj$ will be denoted $(\jj)$, and so on.
From Proposition \ref{hessprop}, we see that $(\hh)$ has the block 
structure
\beq\label{block}
(\hh)=\begin{pmatrix}\frac14(\jj)+\frac14(\L)+(\CC)&\frac12(\AAA)\\
\frac12(\BB)&(\dstar\d)+\I_3\end{pmatrix},
\eeq
where $\AAA,\BB$ were defined in Remark \ref{selfadjoint}, and 
\beq
\CC:\Gamma(\vphi^{-1}TN)\ra\Gamma(\vphi^{-1}TN)\qquad
\CC:X\mapsto-\frac12J\nabla^\vphi_{\sharp\dstar\d C}X.
\eeq
 
We seek an explicit formula for $(\hh)$.
 The $(\jj)$ and $(\L)$ parts of this formula are already known
\cite{ura2,spesve1}, in principle. These papers each equip $S^3$ with a fixed
radius ($R=2\sqrt{2}$ for $\jj$ in \cite{ura2} and $R=2$ for $\L$ in 
\cite{spesve1}),
however, so to make use of their results we must determine how $\jj$ and
$\L$ scale under homotheties of $M$. For completeness, we shall
simultaneously consider their scaling under homotheties of $N$ also.

\begin{prop}\label{homothety}
 Let $\vphi:(M^m,g)\ra (N^n,h)$ be harmonic with Jacobi
operator $\jj$. If $\wt{g}=R_1^2\, g$, $\wt{h}=R_2^2\, h$, where 
$R_1,R_2>0$ are constants, then the Jacobi operator of $\vphi$ as a
harmonic map $(M,\wt{g})\ra (N,\wt{h})$ is
$$
\wt{\jj}=R_1^{-2}\jj.
$$
Similarly, let $\vphi:(M^m,g)\ra (N^n,h)$ be a critical point of $E_2$
with symplectic Jacobi operator $\L$. Then the symplectic Jacobi operator
of $\vphi$ as a critical point $(M,\wt{g})\ra (N,\wt{h})$ is
$$
\wt{\L}=R_1^{-4}R_2^2\L.
$$

\end{prop}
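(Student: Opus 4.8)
The plan is to reduce both assertions to bookkeeping of how each constituent of the operators in Propositions~\ref{jjprop} and~\ref{lprop} rescales, exploiting the one structural fact that a \emph{constant} conformal change of a metric leaves its Levi-Civita connection unchanged (the Christoffel symbols are scale-invariant). Consequently the pullback connection $\nabla^\vphi$, the full curvature tensor viewed as the $(1,3)$-tensor $R(\cdot,\cdot)\cdot$, the complex structure $J$, and the differential $\d\vphi$ are all inert under both rescalings, and I would not touch them. In particular the same fact shows $\vphi$ remains harmonic, resp.\ $E_2$-critical, after rescaling (since $\nabla\d\vphi$ is unchanged and $\wt\tau=R_1^{-2}\tau$), so $\wt\jj$ and $\wt\L$ are genuinely defined. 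The only scale-sensitive ingredients are then a $g$-orthonormal frame, the musical isomorphism $\sharp$, the codifferential $\dstar$, and the K\"ahler form $\omega$.

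First I would record the elementary scalings. A $g$-orthonormal frame $\{e_k\}$ becomes $\{\wt e_k=R_1^{-1}e_k\}$ for $\wt g$; from $\wt\flat=R_1^2\flat$ one reads off $\wt\sharp=R_1^{-2}\sharp$; and $\wt\omega(X,Y)=\wt h(JX,Y)=R_2^2\,\omega(X,Y)$, so that $\vphi^*\wt\omega=R_2^2\vphi^*\omega$ and likewise $\vphi^*\iota_Y\wt\omega=R_2^2\vphi^*\iota_Y\omega$. The one step requiring a genuine (if short) computation is the scaling of $\dstar$. Writing $\dstar=\pm\ast\d\ast$ and using that on $k$-forms the Hodge star of $\wt g=R_1^2g$ satisfies $\wt\ast=R_1^{\,m-2k}\ast$ (which follows from the volume element scaling by $R_1^m$ together with the $R_1^{-2k}$ scaling of the induced inner product on $k$-forms), the two occurrences of $\wt\ast$ in $\dstar$ act on input degrees $k$ and $m-k+1$, carrying exponents $m-2k$ and $2k-m-2$; these sum to $-2$ \emph{independently of $k$}, while $\d$ is inert. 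Hence $\wt\dstar=R_1^{-2}\dstar$ on $\Omega^k(M)$ for every $k$. This uniform-in-degree cancellation is the point I expect to be most error-prone, and it is the linchpin of the proposition; it is also reassuringly consistent with the $\lambda\mapsto\lambda/R^2$ spectral scaling noted earlier for the Laplacian.

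With these facts in hand, $\wt\jj$ is immediate. In $\ol\Delta_\vphi$ each of the terms $\nabla^\vphi_{\wt e_k}\nabla^\vphi_{\wt e_k}Y$ and $\nabla^\vphi_{\wt\nabla_{\wt e_k}\wt e_k}Y$ picks up exactly $R_1^{-2}$ (two factors of $\wt e_k=R_1^{-1}e_k$, with $\nabla^\vphi$ and the Levi-Civita connection unchanged), and in $\rr_\vphi Y=\sum_k R(Y,\d\vphi e_k)\d\vphi e_k$ the two frame vectors again yield $R_1^{-2}$ while $R$ and $\d\vphi$ are inert. Thus $\wt{\ol\Delta}_\vphi=R_1^{-2}\ol\Delta_\vphi$ and $\wt\rr_\vphi=R_1^{-2}\rr_\vphi$, giving $\wt\jj=R_1^{-2}\jj$, with no $R_2$-dependence --- exactly as the invariance of $J$, $R$ and $\nabla^\vphi$ under target rescaling predicts.

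For $\wt\L$ I would treat its two terms separately. The field $Z_\vphi=\sharp\dstar\vphi^*\omega$ rescales as $\wt Z_\vphi=\wt\sharp\,\wt\dstar(\vphi^*\wt\omega)=R_1^{-2}\cdot R_1^{-2}\cdot R_2^2\,Z_\vphi=R_1^{-4}R_2^2\,Z_\vphi$, so $-J\nabla^\vphi_{Z_\vphi}Y$ scales by $R_1^{-4}R_2^2$. In the second term $-J\d\vphi(\sharp\dstar\d\vphi^*\iota_Y\omega)$, the inner $1$-form $\vphi^*\iota_Y\omega$ carries $R_2^2$, $\d$ is inert, the outer $\dstar$ contributes $R_1^{-2}$, the outer $\sharp$ another $R_1^{-2}$, and $\d\vphi$, $J$ are inert, so this term too scales by $R_1^{-4}R_2^2$. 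Adding gives $\wt\L=R_1^{-4}R_2^2\L$. The only care beyond the arithmetic is to confirm that the intervening forms really have the degrees I have assumed (so the uniform codifferential scaling applies) and that $\sharp\dstar$ acts on the fully assembled $1$-forms, both of which are transparent from the explicit formulae.
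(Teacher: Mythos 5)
Your proof is correct, but it takes a genuinely different route from the paper's. You open up the explicit formulas of Propositions \ref{jjprop} and \ref{lprop} and rescale each ingredient: the invariance of the Levi-Civita connections under constant rescaling (hence of $\nabla^\vphi$, the $(1,3)$-curvature, $J$ and $\d\vphi$), together with $\wt{e}_k=R_1^{-1}e_k$, $\wt{\sharp}=R_1^{-2}\sharp$, $\wt{\omega}=R_2^2\,\omega$, and the degree-independent scaling $\wt{\dstar}=R_1^{-2}\dstar$, and then you assemble the factors term by term. The paper never opens up the operators at all: it treats $\jj$ and $\L$ as the operators representing the second variations of $E_1$ and $E_2$, notes that these functionals scale by overall constants, $\wt{E}_1=R_1^{m-2}R_2^2E_1$ and $\wt{E}_2=R_1^{m-4}R_2^4E_2$ (using $\wt{*}=R_1^{m-2p}*$ and $\wt{\omega}=R_2^2\omega$), and then divides out the scaling of the $L^2$ pairing $\int_M\wt{h}(X,\cdot)\,\wt{*}1=R_1^mR_2^2\int_Mh(X,\cdot)*1$ to read off $\wt{\jj}=R_1^{-2}\jj$ and $\wt{\L}=R_1^{-4}R_2^2\L$. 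The paper's argument is shorter and insensitive to the precise form of the operators (it uses only that the Hessian is the second variation and that the $L^2$ pairing is nondegenerate), whereas your computation buys two things the paper leaves implicit: an explicit verification that $\vphi$ remains harmonic, respectively $E_2$-critical, after rescaling (so that $\wt{\jj}$ and $\wt{\L}$ are actually defined), and the reusable identity $\wt{\dstar}=R_1^{-2}\dstar$, which also explains the spectral scaling $\lambda\mapsto\lambda/R^2$ invoked in Section \ref{bounds}.
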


\begin{proof}
Let $\vphi_{s,t}:M\ra N$ be
a smooth two-parameter variation of $\vphi=\vphi_{0,0}$, with
$\cd_s\vphi_{s,t}|_{s=t=0}=X$, 
$\cd_t\vphi_{s,t}|_{s=t=0}=Y\in\Gamma(\vphi^{-1}TN)$, and note that
the Hodge isomorphism $\Omega^p(M)\ra\Omega^{m-p}(M)$
scales as 
$\wt{*}=R_1^{m-2p}*$
under the homothety $g\ra\wt{g}$. We have, in obvious
notation,
\ben
\wt{E}_1(\vphi_{s,t})&=&R_1^{m-2}R_2^2E_1(\vphi_{s,t})\\
\Rightarrow\quad
\left.\frac{\cd^2\wt{E}_1(\vphi_{s,t})}{\cd s\, \cd t}\right|_{s=t=0}
&=&R_1^{m-2}R_2^2\int_Mh(X,\jj Y)*1
=\int_M\wt{h}(X,R^{-2}_1\jj Y)\wt{*}1. 
\een
Similarly, since $\wt{\omega}=R_2^2\omega$,
\ben
\wt{E}_1(\vphi_{s,t})&=&\int_M\vphi_{s,t}^*\wt{\omega}\wedge
\wt{*}\vphi_{s,t}^*\wt{\omega}
=R_1^{m-4}R_2^4E_2(\vphi_{s,t})\\
\Rightarrow\quad
\left.\frac{\cd^2\wt{E}_2(\vphi_{s,t})}{\cd s\, \cd t}\right|_{s=t=0}
&=&R_1^{m-4}R_2^4\int_Mh(X,\L Y)*1
=\int_M\wt{h}(X,R^{-4}_1R_2^2\L Y)\wt{*}1. 
\een
\end{proof}

It follows from the results of \cite{ura2} and Proposition \ref{homothety}
that, for the Hopf fibration $S^3_R\ra S^2$,
\beq\label{Jmat}
(\jj)=\frac{4}{R^2}\left(\begin{array}{cc}
-(\thet_1^2+\thet_2^2+\thet_3^2)&-2\thet_3\\
2\thet_3&-(\thet_1^2+\thet_2^2+\thet_3^2)
\end{array}\right).
\eeq
 Similarly, from
the calculation
in \cite{spesve1} and Proposition \ref{homothety}, one sees that
\beq\label{Lmat}
(\L)=\frac{16}{R^4}\begin{pmatrix}
-\thet_1^2-\thet_3^2 & -\thet_3-\thet_1\thet_2 \\ 
\thet_3-\thet_2\thet_1 & -\thet_2^2-\thet_3^2\end{pmatrix}.
\eeq
To complete the computation of the top-left $2\times 2$ block of $(\hh)$,
we need the matrix representing $\CC$. Now
\beq
\sharp\dstar\d C=\frac{32}{R^4(4+R^2)}\thet_3
\eeq
and, from equation (\ref{pbc}),
\beq
\nabla^\vphi_{\thet_3}(f_1\thet_1+f_2\thet_2)=\thet_3(f_1)\thet_1
+\thet_3(f_2)\thet_2-f_1\thet_2+f_2\thet_1.
\eeq
Hence, since $J\thet_1=[\thet_3,\thet_1]=-\thet_2$,
\beq\label{Cmat}
(\CC)=\frac{16}{R^4(4+R^2)}\begin{pmatrix}
1&-\thet_3\\ \thet_3& 1
\end{pmatrix}.
\eeq

A short calculation of $\delta\d (f\sigma_a)$ using
$\d\sigma_1=\sigma_2\wedge\sigma_3$,
$*\sigma_1\wedge\sigma_2=\frac{2}{R}\sigma_3$, and cyclic
permutations, shows that 
\beq\label{ddmat}
(\dstar\d)=\frac{4}{R^2}\begin{pmatrix}
1-\thet_2^2-\thet_3^2&\thet_2\thet_1-2\thet_3&\thet_3\thet_1+2\thet_2\\
\thet_1\thet_2+2\thet_3&1-\thet_1^2-\thet_3^2&\thet_3\thet_2-2\thet_1\\
\thet_1\thet_3-2\thet_2&\thet_2\thet_3+2\thet_1&1-\thet_1^2-\thet_2^2
\end{pmatrix}.
\eeq

It remains to compute the top-right block $(\AAA)$ and the bottom-left
block $(\BB)$ of $(\hh)$. These form an $L^2$ adjoint pair, but care
must be taken when using this fact, since our basis for $\EE$ is
{\em not} orthonormal. Recall that $\d\vphi$ coincides with
orthogonal projection $\g\ra\p$, so, making use of  the formula 
(\ref{ddmat}) for 
$(\dstar\d)$ above, one finds,
\bea\label{Amat}
(\AAA)&=&-\frac12\begin{pmatrix}0&1\\-1&0\end{pmatrix}
\frac{4}{R^2}\begin{pmatrix}1&0&0\\0&1&0\end{pmatrix}(\dstar\d)\nonumber\\
&=&\frac{8}{R^4}\begin{pmatrix}
-\thet_1\thet_2-2\thet_3&\thet_1^2+\thet_3^2&-\thet_3\thet_2+2\thet_1\\
-\thet_2^2-\thet_3^2&\thet_2\thet_1-2\thet_3&\thet_3\thet_1+2\thet_2
\end{pmatrix}.
\eea
Finally, the map $\Lambda:\Gamma(\vphi^{-1}TN)\ra\Omega^1(M)$,
$\Lambda:X\mapsto\vphi^*\iota_X\omega$ has matrix representative
\beq
(\Lambda)=\begin{pmatrix}0&1\\-1&0\\0&0\end{pmatrix},
\eeq
which, together with (\ref{ddmat}), yields
\beq\label{Bmat}
(\BB)=\frac12(\dstar\d)(\Lambda)=\frac{2}{R^2}
\begin{pmatrix}
-\thet_2\thet_1+2\thet_3&-\thet_2^2-\thet_3^2\\
\thet_1^2+\thet_3^2&\thet_1\thet_2+2\thet_3\\
-\thet_2\thet_3+2\thet_1&\thet_1\thet_3-2\thet_2
\end{pmatrix}
\eeq
Assembling (\ref{Jmat}), (\ref{Lmat}), (\ref{Cmat}), (\ref{ddmat}), 
(\ref{Amat}),
(\ref{Bmat}) with (\ref{block}) we obtain the explicit formula for
$(\hh)$, which we will not reproduce here.

We now have an expression for $\hh$ as a matrix of differential
operators acting on functions on $G$.
To proceed further, we must choose a basis for $L^2(G,\C)$, and for this 
purpose we appeal to the Peter-Weyl theorem \cite[p.\ 17]{petwey}.
According to this, the matrix elements of the finite-dimensional
irreducible unitary representations of $G$ form an orthonormal basis for
$L^2(G,\C)$. Now $\hh$ commutes with the obvious action of $G$ on
$\Gamma(\EE)$. Hence, for a given fixed
irreducible representation of $G$,
 the finite dimensional subspace of $\Gamma(\EE)$
in which each of $f_1,f_2,\ldots, f_5:G\ra\C$ lies in the span of
the matrix elements of that representation is invariant under $\hh$.
So we can break down the (infinite dimensional) eigenvalue problem for
$\hh$ into an infinite sequence of {\em finite} dimensional eigenvalue
problems, indexed by the irreducible unitary representations of $G$.
In this case, $G=\SU(2)$, whose irreducible unitary representations
are indexed by a non-negative integer $n$, called the
weight
 (loosely speaking, twice
the ``spin'' associated with the representation). For fixed $n$, we
can construct $(n+1)\times(n+1)$ antihermitian matrices representing
$\thet_1,\thet_2,\thet_3$, and hence construct a $(5n+5)\times (5n+5)$
matrix $\hh^{(n)}$ representing $\hh$ acting on the weight $n$
invariant subspace of $\Gamma(\EE)$. The machinery for
dealing with the general $n$ case is developed in \cite{spesve1}.
However, for our purposes, we will need only the fundamental representation
$n=1$. In this case, we simply replace $\thet_a$ by $\frac{i}{2}\tau_a$
in $(\hh)$ to obtain a $10\times 10$ complex matrix $\hh^{(1)}$, each 
of whose entries is a rational function of $R$, the radius of $M$.
The eigenvalue problem for $\hh^{(1)}$ can be solved exactly, by Maple, for 
example. One finds that $\hh^{(1)}$ has an eigenvalue of multiplicity two,
namely
\ben
\lambda(R)&=&\frac{p(R)-\sqrt{p(R)^2+3840R^4+1536R^6+208R^8+16R^{10}}}{8R^4(4+R^2)},\\
\mbox{where}\quad p(R)&=&48+144{R}^{2}+51{R}^{4}+4{R}^{6}.
\een
 Clearly $\lambda(R)<0$ for all $R>0$. Since
our basis vectors for $L^2(G,\C)$ are {\em matrix} elements, they are
indexed by an ordered {\em pair} of indices, taking values in
$\{1,\ldots, n+1\}$. It follows that an eigenvalue of $\hh^{(n)}$ of
multiplicity $m$
is an eigenvalue of $\hh$ (restricted to the weight $n$ invariant
subspace of $\Gamma(\EE)$) of multiplicity $(n+1)m$. Hence,
the {\em index} (dimension of the sum of the negative eigenspaces)
of $\hh$ is at least $2\times(1+1)=4$.
The eigenvalue problem for $\hh^{(n)}$ for
$1\leq n\leq 10$ has been solved numerically for various $R$, no further
negative eigenvalues being found. It seems likely, therefore, that
the index of $\hh$ is exactly $4$. In any case, the $n=1$ calculation
outlined above establishes rigorously that the embedded Hopf fibration
is an {\em unstable} critical point of $E$, for all choices of the
radius $R$ of $S^3_R$.

\section*{Acknowledgements}

The author wishes to thank Martin Svensson for valuable conversations,
and acknowledges financial support from the UK Engineering and Physical
Sciences Research Council.

\end{document}